\newtheorem{thm}{Theorem}
\newtheorem{cor}[thm]{Corollary}
\newtheorem{defi}[thm]{Definition}
\newtheorem{rem}[thm]{Remark}
\newtheorem{nota}[thm]{Notation}
\newtheorem{tempie}[thm]{Template}
\newtheorem{ack}[thm]{Acknowledgement}
\newcommand\be{\begin{equation}}
\newcommand\ee{\end{equation}}
\newbox\gnBoxA
\newdimen\gnCornerHgt
\newdimen\gnArgHgt
\def\Godelnum #1{%
	\setbox\gnBoxA=\hbox{$#1$}%
	\gnArgHgt=\ht\gnBoxA%
	\ifnum \gnArgHgt<\gnCornerHgt
		\gnArgHgt=0pt%
	\else
		\advance \gnArgHgt by -\gnCornerHgt%
	\fi
	\raise\gnArgHgt\hbox{$\ulcorner$} \box\gnBoxA %
		\raise\gnArgHgt\hbox{$\urcorner$}}
\def\bdefi{\begin{defi}\rm}
\def\edefi{\end{defi}}
\def\bnota{\begin{nota}\rm}
\def\enota{\end{nota}}
\def\brem{\begin{rem}\rm}
\def\erem{\end{rem}}
\def\FIVE{\Pi_{1}^{1}\text{-\textsf{CA}}_{0}}
\def\intern{\textup{\textsf{int}}}
\def\IST{\textup{\textsf{IST}}}
\def\PCM{\textup{\textsf{PCM}}}
\def\ZFC{\textup{\textsf{ZFC}}}
\def\ATR{\textup{\textsf{ATR}}}
\def\STP{\textup{\textsf{STP}}}
\def\DNR{\textup{\textsf{DNR}}}
\def\H{\textup{\textsf{H}}}
\def\RCA{\textup{\textsf{RCA}}}
\def\ef{\textup{\textsf{ef}}}
\def\ns{\textup{\textsf{ns}}}
\def\WKL{\textup{\textsf{WKL}}}
\def\bye{\end{document}}
\def\P{\textup{\textsf{P}}}
\def\R{{\mathbb  R}}
\def\MUC{\textup{\textsf{MUC}}}
\def\MCT{\textup{\textsf{MCT}}}
\def\R{{\mathbb{R}}}
\def\({\textup{(}}
\def\){\textup{)}}
\def\st{\textup{st}}
\def\asa{\leftrightarrow}
\def\di{\rightarrow}
\def\ACA{\textup{\textsf{ACA}}}
\def\paai{\Pi_{1}^{0}\textup{-\textsf{TRANS}}}
\def\Paai{\Pi_{1}^{1}\textup{-\textsf{TRANS}}}
\def\QFAC{\textup{\textsf{QF-AC}}}
\def\CI{{\mathfrak{CI}}}
\def\GHU{\textup{\textsf{GHU}}}
\def\NSC{\textup{\textsf{NSC}}}
\def\ST{\textup{\textsf{ST}}}
\def\MPC{\textup{\textsf{MPC}}}
\def\sigtoe{\Sigma_{2}^{0}\textup{\textsf{-TRANS}}}
\def\GH{\textup{\textsf{GH}}}
\def\SCF{\textup{\textsf{SCF}}}
\def\DIV{\textup{\textsf{DIV}}}
\def\PST{\textup{\textsf{PST}}}
\def\her{\textup{\textsf{her}}}
\def\MU{\textup{\textsf{MU}}}
\def\MUO{\textup{\textsf{MUO}}}
\def\CRI{\textup{\textsf{CRI}}}
\def\UDNR{\textup{\textsf{UDNR}}}
\numberwithin{equation}{section}
\numberwithin{thm}{section}
\title{	The computational content of Nonstandard Analysis}
\author{Sam Sanders
\institute{MCMP\\ LMU Munich, Germany}
\institute{Department of Mathematics\\
Ghent University\\
Ghent, Belgium}
\email{sasander@me.com}
%\and
%Co Author \qquad\qquad Yet S. Else
%\institute{Stanford Univeristy\\
%California, USA}
%\email{\quad is@gmail.com \quad\qquad somebody@else.org}
}
\begin{document}
\maketitle

\begin{abstract}
Kohlenbach's \emph{proof mining} program deals with the extraction of effective information from typically ineffective proofs. Proof mining has its roots in Kreisel's pioneering work on the so-called unwinding of proofs.   The proof mining of classical mathematics is rather restricted in scope due to the existence of sentences without computational content which are provable from the law of excluded middle and which involve \emph{only two} quantifier alternations.  By contrast, we show that the proof mining of classical Nonstandard Analysis has a very large scope. In particular, we will observe that this scope includes any theorem of \emph{pure} Nonstandard Analysis, where `pure' means that only nonstandard definitions (and not the \emph{epsilon-delta} kind) are used. In this note, we survey results in analysis, computability theory, and Reverse Mathematics. 
%
%Kohlenbach's program 'proof mining' deals with the extraction of effective information from typically ineffective proofs.  Proof mining has its roots in Kreisel's pioneering work on the 'unwinding of proofs'.  The proof mining of *classical* mathematics is rather restricted in scope due to existence of sentences without computational content which are provable from the law of excluded middle and which involve *only two* quantifier alternations.  By contrast, we show that the proof mining of classical Nonstandard Analysis has a very large scope.  In particular, we will observe that this scope includes any theorem of 'pure' Nonstandard Analysis, where 'pure' means that only nonstandard definitions (and not the 'epsilon-delta' kind) are used.  Besides analysis, we also obtain results in computability theory and Reverse Mathematics. 
\end{abstract}

%\tableofcontents

\section{Introduction}\label{intro}
%\subsection{The computational content of Nonstandard Analysis}
\noindent
The aim of this note is to survey the vast \emph{computational} content of classical Nonstandard Analysis as established in \cite{samzoo, samzooII, samGH, sambon}.
Results are mostly presented without proofs but references are provided.  

\medskip

First of all, numerous practitioners of Nonstandard Analysis have alluded to the constructive nature of its praxis; The following quotes serve as a representative illustration.  
\begin{quote}
\emph{It has often been held that nonstandard analysis is highly non-constructive, thus somewhat suspect, depending as it does upon the ultrapower construction to produce a model \textup{[\dots]} On the other hand, nonstandard \emph{praxis} is remarkably constructive; having the extended number set we can proceed with explicit calculations.} (Emphasis in original: \cite{NORSNSA}*{p.\ 31})
\end{quote}
\begin{quote}
\emph{Those who use nonstandard arguments often say of their proofs that they are ``constructive modulo an ultrafilter''; implicit in this statement is the suggestion that such arguments might give rise to genuine constructions}. (\cite{rossenaap}*{p.\ 494})
\end{quote}
%Similar observations are in \cites{watje, rosse,kifar, kieken, Oss3, Oss2, sc, nsawork2, venice,fath, jep}.  
The reader may interpret the word \emph{constructive} as the mainstream/classical notion `effective', or as the foundational notion from Bishop's \emph{Constructive Analysis} (\cite{bridge1}).  
As will become clear, both cases will be treated below \emph{and separated carefully}.      
%In particular, the aim of this paper is to make the observation from the quote precise, i.e.\ we wish
%\begin{center}
%\emph{\small to formalise the observation that the praxis of Nonstandard Analysis is constructive}.     
%\end{center}

\medskip

To uncover the computational content of Nonstandard Analysis alluded to in the above quotes, we shall introduce a template $\CI$ in Section \ref{detail} which converts a theorem of \emph{pure} Nonstandard Analysis into the associated `constructive' theorem;   
Here, a theorem of `pure' Nonstandard Analysis is one formulated solely with the \emph{nonstandard} definitions (of continuity, convergence, etc) rather than the usual `epsilon-delta' definitions.  
%We shall make use of the classification provided by \emph{Reverse Mathematics} (See Section \ref{RM}) to establish that the scope of the template $\CI$ includes most of non-set theoretical mathematics.  In view of this \emph{huge} scope, the `unreasonable effectiveness of Nonstandard Analysis' is no exaggeration, and even comes as a surprise in light of the claims by Bishop and Connes regarding the constructive nature of Nonstandard Analysis, as discussed in Section \ref{foef}.     
%We discuss the aim of this paper in detail in Section \ref{haim} by way of an elementary example.
We present a wide range of applications of the template $\CI$ in this note.  

\medskip

On a historical note, the late Grigori Mints has repeatedly pushed the author to investigate the computational content of classical Nonstandard Analysis.  
In particular, Mints conjectured the existence of results analogous or similar to Kohlenbach's \emph{proof mining} program (\cite{kohlenbach3}).  
The latter program has its roots in Kreisel's pioneering work on the `unwinding' of proofs, where the latter's goal is similar to ours:
\begin{quote}
\emph{To determine the constructive \(recursive\) content or the constructive equivalent of the non-constructive concepts and theorems used in mathematics}, particularly arithmetic and analysis.  (Emphasis in original on \cite{kreimiearivier}*{p.\ 155})
\end{quote}
%We discuss the connection of our results to Kohlenbach's proof mining in Section \ref{AZA} below.    
%In particular, we show how the well-known limitations of the proof mining of classical mathematics are avoided by studying \emph{pure} (classical) Nonstandard Analysis.  
%
%\medskip
%
Finally, Horst Osswald has qualified the observation from the above quotes as \emph{Nonstandard Analysis is locally constructive}, to be understood as the fact that the mathematics performed in the nonstandard world is highly effective while the principles needed to `jump between' the nonstandard world and usual mathematics, are highly non-constructive in general (See \cite{nsawork2}*{\S7}, \cite{Oss3}*{\S1-2}, or \cite{Oss2}*{\S17.5}).  The results in this paper shall be seen to vindicate both the Mints and Osswald view.  

\section{About and around internal set theory}
In this section, we introduce Nelson's \emph{internal set theory}, and its fragments $\P$ and $\H$ from \cite{brie}.  
We discuss the term extraction result in Corollary \ref{consresultcor}, which is central to our enterprise.  
\subsection{Internal set theory}\label{PIPI}
In Nelson's \emph{syntactic} approach to Nonstandard Analysis (\cite{wownelly}), as opposed to Robinson's semantic one (\cite{robinson1}), a new predicate `st($x$)', read as `$x$ is standard' is added to the language of \textsf{ZFC}, the usual foundation of mathematics.  
The notations $(\forall^{\st}x)$ and $(\exists^{\st}y)$ are short for $(\forall x)(\st(x)\di \dots)$ and $(\exists y)(\st(y)\wedge \dots)$.  A formula is \emph{internal} if it does not involve `st', and \emph{external} otherwise.   
The three external axioms \emph{Idealisation}, \emph{Standard Part}, and \emph{Transfer} govern the new predicate `st';  They are respectively defined\footnote{The superscript `fin' in \textsf{(I)} means that $x$ is finite, i.e.\ its number of elements are bounded by a natural number.} as:  
\begin{enumerate}
\item[\textsf{(I)}] $(\forall^{\st~\textup{fin}}x)(\exists y)(\forall z\in x)\varphi(z,y)\di (\exists y)(\forall^{\st}x)\varphi(x,y)$, for internal $\varphi$ with any parameters.  
\item[\textsf{(S)}] $(\forall^{\st} x)(\exists^{\st}y)(\forall^{\st}z)\big((z\in x\wedge \varphi(z))\asa z\in y\big)$, for any $\varphi$.
\item[\textsf{(T)}] $(\forall^{\st}t)\big[(\forall^{\st}x)\varphi(x, t)\di (\forall x)\varphi(x, t)\big]$, where $\varphi(x,t)$ is internal, and only has free variables $t, x$.  %  $t$ captures \emph{all} parameters of $\varphi$, and $t$ is standard.  
\end{enumerate}
The system \textsf{IST} is (the internal system) \textsf{ZFC} extended with the aforementioned external axioms;  
The former is a conservative extension of \textsf{ZFC} for the internal language, as proved in \cite{wownelly}.    % i.e.\ without `st'. 

\medskip

In \cite{brie}, the authors study G\"odel's system $\textsf{T}$ extended with special cases of the external axioms of \textsf{IST}.  
In particular, they introduce the systems $\H$ and $\P$ which are conservative extensions of the (internal) logical systems \textsf{E-HA}$^{\omega}$ and $\textsf{E-PA}^{\omega}$, respectively \emph{Heyting and Peano arithmetic in all finite types and the axiom of extensionality}.       
We refer to \cite{kohlenbach3}*{\S3.3} and \cite{brie}*{\S2} for the exact definitions of the (mainstream in mathematical logic) systems \textsf{E-HA}$^{\omega}$ and $\textsf{E-PA}^{\omega}$ and the associated extensions \textsf{E-HA}$^{\omega*}$ and $\textsf{E-PA}^{\omega*}$.  
We refer to \cite{sambon} and \cite{samzooII} for the exact definition of the systems $\P$ and $\H$.  Their importance lies in the \emph{term extraction corollary} which we discuss in the next section.

\medskip

Note that the contraposition of the idealisation axiom \textsf{(I)} allows one to `push outside' a standard quantifier.  The axiom \textsf{(I)} (formulated in the language of finite types) is included in $\P$ and $\H$; We shall need this axiom in the proof of Theorem \ref{varou} and therefore list it as follows:  
\bdefi[Idealisation \textsf{I}]
For any internal formula $\varphi$, we have
%\be\tag{\textsf{I}}
\be\label{pislke}
(\forall^{\st} x^{\sigma^{*}})(\exists y^{\tau} )(\forall z^{\sigma}\in x)\varphi(z,y)\di (\exists y^{\tau})(\forall^{\st} x^{\sigma})\varphi(x,y), 
%?stx?? ?y? ?x??? x???x?,y????y? ?stx? ?(x,y).
\ee
\edefi
\noindent Note that $x^{\sigma^{*}}$ in the antecedent of \eqref{pislke} is a finite sequence of objects of type $\sigma$.

\medskip

Finally, we note that $\IST$ is just $\ZFC$ with an extra unary predicate governed by the aforementioned axioms.  In other words, all the \emph{usual} definitions (of real function, large cardinal, Turing machine, etc) from $\ZFC$ can also be stated in $\IST$ \emph{by exactly the same formula} of $\ZFC$.  The same holds for $\P$ and $\H$; In particular, the latter systems use Kohlenbach's definition of  real number and real function from \cite{kohlenbach2} in the higher-type framework.

\subsection{The term extraction corollary}\label{kintro}
In this section, we discuss the central tool of our investigation, namely the \emph{term extraction corollary} of the system $\P$, and sketch its vast scope.  
The following is essentially a corollary to \cite{brie}*{Theorem 7.7}.    
\begin{cor}[Term extraction]\label{consresultcor}
%For internal $\psi$ and $\Phi(\underline{a})\equiv(\forall^{\st}\underline{x})(\exists^{\st}\underline{y})\psi(\underline{x},\underline{y}, \underline{a})$, we have $[\Phi(\underline{a})]^{S_{\st}}\equiv \Phi(\underline{a})$.  Hence, 
If $\Delta_{\intern}$ is a collection of internal formulas and $\psi$ is internal, and
\be\label{bog}
\P + \Delta_{\intern} \vdash (\forall^{\st}\underline{x})(\exists^{\st}\underline{y})\psi(\underline{x},\underline{y}, \underline{a}), 
\ee
then one can extract from the proof a sequence of closed terms $t$ in $\mathcal{T}^{*}$ such that
\be\label{effewachten}
\textup{\textsf{E-PA}}^{\omega*}+ \Delta_{\intern} \vdash (\forall \underline{x})(\exists \underline{y}\in t(\underline{x}))\psi(\underline{x},\underline{y},\underline{a}).
\ee
%If for internal $\psi$ the formula $\Phi(\underline{a})\equiv(\forall^{\st}\underline{x})(\exists^{\st}\underline{y})\psi(\underline{x},\underline{y}, \underline{a})$ satisfies \eqref{antecedn}, then 
%$(\forall \underline{x})(\exists \underline{y}\in t(\underline{x}))\psi(\underline{x},\underline{y},\underline{a})$ is proved in the corresponding formula \eqref{consequalty}.  
\end{cor}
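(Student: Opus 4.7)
The plan is to derive this statement as a corollary of Theorem 7.7 of \cite{brie}, which provides the analogous term-extraction property for the intuitionistic nonstandard system $\H$. The bridge between $\P$ and $\H$ is a suitable nonstandard negative translation, after which the $\H$-version of term extraction applies more or less directly.

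First I would apply the nonstandard variant of the G\"odel--Gentzen negative translation from \cite{brie} to the hypothesised $\P$-derivation of $(\forall^{\st}\underline{x})(\exists^{\st}\underline{y})\psi(\underline{x},\underline{y},\underline{a})$. Since $\psi$ is internal and the internal fragment of $\P$ is classical, the translation acts essentially trivially on the matrix; on the external prefix it produces, up to provable equivalence in $\H$, a formula of the form $(\forall^{\st}\underline{x})\neg\neg(\exists^{\st}\underline{y})\psi(\underline{x},\underline{y},\underline{a})$. The Herbrandised design of the nonstandard functional interpretation in \cite{brie} is precisely tailored to accept this shape and still yield a finite list of candidate witnesses, so no computational content is lost in passing through the $\neg\neg$.

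Next I would invoke Theorem 7.7 of \cite{brie} on the resulting $\H$-derivation: it extracts a closed term $t$ of $\mathcal{T}^{*}$ (the finite-sequence extension of G\"odel's $\textsf{T}$) such that $(\forall \underline{x})(\exists \underline{y}\in t(\underline{x}))\psi(\underline{x},\underline{y},\underline{a})$ is provable in $\textup{\textsf{E-HA}}^{\omega*}$. Because the conclusion is purely internal, classical logic is harmless, and the same sequent holds in $\textup{\textsf{E-PA}}^{\omega*}$, which is precisely \eqref{effewachten}. Note that the uniformity asserted in the corollary --- namely that $t$ does not depend on the parameter $\underline{a}$ --- is an immediate consequence of the fact that $\underline{a}$ never appears inside a standardness predicate in the hypothesis, so the Dialectica extraction never produces a witnessing term in $\underline{a}$.

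The principal obstacle, and the reason for the qualifier ``essentially'' in the statement, is the bookkeeping for the internal side hypotheses $\Delta_{\intern}$. One must verify that each $\varphi \in \Delta_{\intern}$ is provably equivalent to its negative translation $\varphi^{N}$ (which, $\varphi$ being internal, reduces to ordinary double-negation elimination in the classical background) and that the nonstandard functional interpretation of an internal formula is the formula itself. Granted both, $\Delta_{\intern}$ travels unchanged through the negative translation and the Dialectica extraction, reappearing on the left of \eqref{effewachten} without contributing any extra witnessing obligations to the term $t$.
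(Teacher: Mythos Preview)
The paper does not give an in-text proof of this corollary; it only remarks that a proof ``making essential use of \cite{brie}*{Theorem~7.7}'' appears in \cite{samzoo,sambon}, and that the corresponding statement for $\H$ is \cite{brie}*{Theorem~5.9}. So there is no detailed argument here to compare against line by line, only the indication that Theorem~7.7 is the intended black box.

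Your outline is essentially correct, but there is a labelling slip worth fixing: Theorem~7.7 of \cite{brie} is already the soundness/term-extraction theorem for the \emph{classical} system $\P$, not for $\H$. The $\H$-side extraction you want to invoke after the negative translation is Theorem~5.6 (or~5.9) of \cite{brie}. What you have actually sketched is the \emph{proof} of Theorem~7.7 itself (Kuroda-style negative translation into $\H$, then the Herbrandised functional interpretation on the $\H$ side), followed by specialisation to the normal form $(\forall^{\st}\underline{x})(\exists^{\st}\underline{y})\psi$. The more direct route the paper has in mind is to take Theorem~7.7 as a black box for $\P$ and simply compute what the nonstandard functional interpretation does to a formula $(\forall^{\st}\underline{x})(\exists^{\st}\underline{y})\psi(\underline{x},\underline{y},\underline{a})$ with $\psi$ internal: since internal formulas are their own interpretation, this collapses immediately to $(\forall \underline{x})(\exists \underline{y}\in t(\underline{x}))\psi$, and $\Delta_{\intern}$ passes through untouched for the same reason. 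Both routes land in the same place; yours just unpacks one layer deeper than necessary. Your remarks on why $t$ is independent of the internal parameters $\underline{a}$ and why $\Delta_{\intern}$ survives unchanged are correct and are exactly the points one has to check in either approach.
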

Note that $t$ does not provide a witnessing functional for $(\exists y)$ in \eqref{effewachten}; In particular  $t(x)$ is only a \emph{finite sequence} (of length $ |t(x)|$) of witnesses for $(\exists y)$.  
For the remainder of this paper, the notion of `normal form' shall \emph{always} refer to a formula of the form $(\forall^{\st}x)(\exists^{\st}y)\varphi(x, y)$ with $\varphi$ internal, i.e.\ without `st'. 

\medskip

Curiously, the previous corollary is not proved in \cite{brie};  A proof making essential use of \cite{brie}*{Theorem~7.7} may be found in \cites{samzoo,sambon}.  
The previous corollary \emph{is} proved for the constructive system $\H$ rather than the classical system $\P$ in \cite{brie}*{Theorem 5.9}, but our interest goes out to classical systems.  
Furthermore, Corollary \ref{consresultcor} does not depend on the full strength of Peano arithmetic:  The same result holds for any system which at least includes $\textsf{EFA}$, also called $I\Delta_{0}+\textsf{EXP}$.   

\medskip

Clearly, Corollary \ref{consresultcor} allows us to extract effective information (in the form of the term $t$) from proofs as in \eqref{bog} in Nonstandard Analysis, to obtain effective results as in \eqref{effewachten} \emph{not involving Nonstandard Analysis}.   
We now discuss why Corollary~\ref{effewachten} has such a vast scope, including all of `pure' Nonstandard Analysis, as claimed in the introduction.  

%
%\subsection{The scope of the term extraction result}\label{kintro}
%In this section, we discuss the scope of the term extraction result in Corollary \ref{consresultcor}.  
%In particular, we try to explain \emph{why} we observe the close connection between the nonstandard and effective versions of mathematical theorems.  
%The following argument is vague but will be made more formal in Section \ref{detail}, using the logical systems $\H$ and $\P$ from Section \ref{PIPI}.  The systems $\P$ and $\H$ are nothing more than conservative extensions (using Nonstandard Analysis) of the mainstream systems \emph{Peano} and \emph{Heyting} arithmetic (See e.g.\ \cite{kohlenbach3}).    
\begin{enumerate}
\item First of all, the \emph{nonstandard} definitions of common notions (such as continuity, integrability, convergence, compactness, et cetera) in Nonstandard Analysis can be brought into the `normal form' $(\forall^{\st}x)(\exists^{\st}y)\varphi(x, y)$.  This can always be done in $\P$ and usually in $\H$.   As an example, nonstandard continuity for $f:\R\di \R$ as follows (where `$x\approx y$' is short for $(\forall^{\st}n^{0})(|x-y|<_{\R}\frac1n)$):
\be\label{soareyou3}
(\forall^{\st}x\in [0,1])(\forall y\in [0,1])[x\approx y \di f(x)\approx f(y)].
\ee
is \emph{equivalent} (over $\P$ or $\H$) to the following normal form by Theorem \ref{helful}:
\be\label{soareyou3098}\textstyle
(\forall^{\st}k^{0})(\forall^{\st}x\in [0,1])(\exists^{\st}N^{0}) \underline{(\forall y\in [0,1])(|x- y|<_{\R}\frac{1}{N} \di |f(x)- f(y)|<_{\R}\frac1k}\big),
\ee
where the underlined formula is internal.  
Similar equivalences hold for nonstandard definitions of compactness, Riemann integration, differentiability, convergence, et cetera.  

%As will become clear, the principle \emph{idealisation} $\textsf{I}$ from \textsf{IST} is essential here.  

\item Secondly, the normal forms are closed under \emph{modus ponens}: Indeed, it is possible (easy in $\P$ and involved in $\H$) to show that an implication between normal forms:
\[
(\forall^{\st}x_{0})(\exists^{\st}y_{0})\varphi_{0}(x_{0}, y_{0})\di (\forall^{\st}x_{1})(\exists^{\st}y_{1})\varphi_{1}(x_{1}, y_{1}),
\]
can \emph{also} be brought into a normal form $(\forall^{\st}x)(\exists^{\st}y)\varphi(x, y)$.  Hence, it seems theorems of `pure' Nonstandard Analysis, i.e.\ formulated solely with nonstandard definitions like \eqref{soareyou3}, can be brought into the latter normal form. 
This can always be done in $\H$ and $\P$.     

\item Third, normal forms are closed under quantification over the nonstandard numbers:  In particular, for internal formulas $\varphi(x, y, M)$, the following formula 
\[
(\forall M^{0})\big(\neg\st(M)\di (\forall^{\st}x))(\exists^{\st}y)\varphi(x,y, M)\big), 
\]
is equivalent to a normal form in $\P$.  The same holds for quantification over nonstandard \emph{higher-type} objects.  This item is significant because applications of Nonstandard Analysis often start with `divide the compact space at hand into pieces of infinitesimal surface/volume/measure $\frac{1}{M^{0}}$'.   

\item Fourth, the normal form $(\forall^{\st}x)(\exists^{\st}y)\varphi(x, y)$ has exactly the right structure to yield the effective version $(\forall x)\varphi(x, t(x))$.   
%(where $t$ is either a term or a functional, depending on the underlying logical system).  
In particular, from the proof of the normal form $(\forall^{\st}x)(\exists^{\st}y)\varphi(x, y)$ (inside $\H$ or $\P$), a term $s$ can be `read off' such that $(\forall x)(\exists y\in s(x))\varphi(x, y)$ 
has a proof inside a system \emph{involving no Nonstandard Analysis}.  The term $t$ is then defined in terms of $s$.  For theorems of analysis, $\varphi(x,y)$ is often monotone in $y$, and $t$ is then just the maximum of all entries of $s$.  For instance, \eqref{soareyou3098} is `monotone in $N$' in the sense that any larger number than $N$ will also do.
\end{enumerate}
It is important to note that there is \emph{no general procedure} to convert the `weak witnessing' term $s$ into a `strong witnessing' term $t$ in the fourth step.  However, when dealing with mathematical theorems (rather than purely logical statements), experience bears out that this conversion is almost always possible.    

\medskip

It goes without saying that most technical details have been omitted from the above sketch, this in order to promote intuitive understanding.  
Nonetheless, the previous four steps form the skeleton of the template $\CI$ introduced in Section~\ref{detail}.  
%For the remainder of this paper, he notion of `normal form' shall \emph{always} refer to a formula of the form $(\forall^{\st}x)(\exists^{\st}y)\varphi(x, y)$ with $\varphi$ internal, i.e.\ without `st'.  
In light of the previous observations, the class of normal forms, and hence the scope of $\CI$, seems to be \emph{very large}, which is what we intend to establish in the remainder.  % of this note.  
%On a related note, we show in Section~\ref{swisch} that a well-known limitation to the proof mining of classical mathematics (See \cite{kohlenbach3}*{\S2.2, p.~22}) 
%does not apply to $\CI$.  
 
\medskip

Finally, the results in this note should be contrasted with the `mainstream' view of Nonstandard Analysis:  %Every approach to the latter introduces the distinction between `standard' and `nonstandard' objects.  
One usually thinks of the universe of standard objects as `the usual world of mathematics', which can be studied `from the outside' using nonstandard objects such as infinitesimals.     
In this richer framework, proofs can be much shorter than those from standard (=non-Nonstandard) analysis;  Furthermore, there are \emph{conservation results} guaranteeing that theorems of usual mathematics \emph{proved using Nonstandard Analysis} can also be proved \emph{without using Nonstandard Analysis}.  Thus, the starting and end point (according to the mainstream view) is always the universe of standard objects, i.e.\ usual mathematics.  By contrast, our starting point is \emph{pure} Nonstandard Analysis and our end point is effective mathematics.

\section{An elementary example}
In this section, we present an elementary example which we believe to be enlightening.  
Based on this example, we formulate a general template $\CI$ to obtain effective theorems from nonstandard ones.  
\subsection{From continuity to Riemann integration}
In this section, we study the statement $\CRI$: \emph{A uniformly continuous function on the unit interval is Riemann integrable}.  
We first obtain the effective version of $\CRI$ from the nonstandard version inside $\P$.  % as in Theorem \ref{varou}.  
We then obtain the same result in the \emph{constructive} system $\H$.  Finally, we re-obtain the nonstandard version from a special effective version, called the \emph{Hebrandisation}.
%\subsubsection{Riemann integration in $\P$}

\medskip

First of all, the `usual' nonstandard definitions of continuity and integration are as follows.  Recall that `$x\approx y$' is an abbreviation for `$(\forall^{\st}n)(|x-y|<_{\R}{\frac{1}{n}})$'.  
\bdefi[Continuity]\label{Kont}
A function $f:\R\di \R$ is \emph{nonstandard continuous} on $[0,1]$ if
\be\label{soareyou32}
(\forall^{\st}x\in [0,1])(\forall y\in [0,1])[x\approx y \di f(x)\approx f(y)].
\ee
A function $f:\R\di \R$ is \emph{nonstandard uniformly continuous} on $[0,1]$ if
\be\label{soareyou4}
(\forall x, y\in [0,1])[x\approx y \di f(x)\approx f(y)].
\ee
\edefi
\bdefi[Integration]\label{kunko}~
\begin{enumerate}
\item A \emph{partition} of $[0,1]$ is any sequence $\pi=(0, t_{0}, x_{1},t_{1},  \dots,x_{M-1}, t_{M-1}, 1)$.  We write `$\pi \in P([0,1]) $' to denote that $\pi$ is such a partition.
\item For $\pi\in P([0,1])$, $\|\pi\|$ is the \emph{mesh}, i.e.\ the largest distance between two partition points $x_{i}$ and $x_{i+1}$. 
\item For $\pi\in P([0,1])$ and $f:\R\di \R$, $S_{\pi}(f):=\sum_{i=0}^{M-1}f(t_{i}) (x_{i}-x_{i+1}) $ is the \emph{Riemann sum} of $f$ and $\pi$.  
\item A function $f:\R\di \R$ is \emph{nonstandard Riemann integrable} on $[0,1]$ if
\be\label{soareyou5}
(\forall \pi, \pi' \in P([0,1]))\big[\|\pi\|,\| \pi'\|\approx 0  \di S_{\pi}(f)\approx S_{\pi}(f)  \big].
\ee
\end{enumerate}
\edefi
Secondly, it was claimed in the previous section that nonstandard continuity has a nice normal form. 
\begin{thm}[$\P$]\label{helful}
Nonstandard uniform continuity \eqref{soareyou4} is equivalent to 
\be\label{soareyou301}\textstyle
(\forall^{\st}k^{0})(\exists^{\st}N^{0}) {(\forall x,y\in [0,1])(|x- y|<_{\R}\frac{1}{N} \di |f(x)- f(y)|<_{\R}\frac1k}\big),
\ee  
\end{thm}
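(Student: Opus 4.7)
The plan is to obtain each direction separately, with the forward direction being the substantive one. Throughout, I would work inside $\P$ and freely use classical logic together with the idealisation axiom $\textsf{I}$ already listed in the excerpt.

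For the implication from \eqref{soareyou4} to \eqref{soareyou301}, I would first unfold the infinitesimal relation $\approx$ using its definition $(\forall^{\st}n)(|x-y|<_{\R}\frac{1}{n})$. This turns \eqref{soareyou4} into
\[
(\forall x,y\in[0,1])\bigl[(\forall^{\st}n^{0})(|x-y|<_{\R}\tfrac{1}{n}) \di (\forall^{\st}k^{0})(|f(x)-f(y)|<_{\R}\tfrac{1}{k})\bigr].
\]
Classical logic then lets me pull the standard universal quantifier $(\forall^{\st}k)$ across the implication to the front, and dualise the standard universal quantifier in the antecedent $(\forall^{\st}n)$ into a standard existential in front of the implication, yielding
\[
(\forall^{\st}k)(\forall x,y\in[0,1])(\exists^{\st}n)\bigl[|x-y|<_{\R}\tfrac{1}{n} \di |f(x)-f(y)|<_{\R}\tfrac{1}{k}\bigr].
\]
The key step is then to apply idealisation $\textsf{I}$ (in its contrapositive form) to swap $(\forall x,y)(\exists^{\st}n)$ into $(\exists^{\st\,\text{fin}}n^{*})(\forall x,y)(\exists n\in n^{*})$ of the same matrix; here $x,y$ are packaged into a single internal parameter and $k$ is a standard parameter of the internal formula. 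Finally, I would note that the internal matrix is monotone in $n$ in the sense that if it holds for some $n$, it holds for any $N\geq n$ (since $|x-y|<\frac{1}{N}$ implies $|x-y|<\frac{1}{n}$). So setting $N:=\max(n^{*})$, which is standard because $n^{*}$ is standard and finite, collapses $(\exists n\in n^{*})$ to a single witness and produces exactly \eqref{soareyou301}.

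For the reverse direction, I would simply fix $x,y\in[0,1]$ with $x\approx y$ and a standard $k$; the hypothesis furnishes a standard $N$ witnessing the internal $\varepsilon$-$\delta$ statement for $\frac{1}{k}$, and since $N$ is standard, $x\approx y$ directly gives $|x-y|<_{\R}\frac{1}{N}$, so the conclusion $|f(x)-f(y)|<_{\R}\frac{1}{k}$ follows. As $k$ was an arbitrary standard integer, $f(x)\approx f(y)$.

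The hard part will be the idealisation step, not conceptually but in checking the side-conditions: I must verify that the matrix I feed to $\textsf{I}$ is internal (which it is, since $k$ and $n$ are type-$0$ parameters and the real-arithmetical content is quantifier-free modulo the standard coding of reals in $\P$), and that the pair $\langle x,y\rangle$ can legitimately play the role of the $y^{\tau}$ in the axiom schema \eqref{pislke}. Once these are dispatched, the monotonicity observation that turns the finite sequence witness $n^{*}$ into a single standard witness $N$ is routine but essential — it is precisely the fourth bullet of the template discussion in Section~\ref{kintro}, here in miniature.
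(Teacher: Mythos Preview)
Your proposal is correct and follows essentially the same route as the paper: unfold $\approx$, push the standard quantifiers outward, apply idealisation \textsf{I} in contrapositive form to obtain a finite standard sequence of witnesses, and then use monotonicity of the matrix to collapse that sequence to its maximum. The paper omits the reverse direction as immediate and does not dwell on the side-conditions for \textsf{I}, but your additional remarks there are sound.
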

\begin{proof}
We only need to prove the forward implication.  
Resolving `$\approx$' in \eqref{soareyou4}, we obtain 
\[\textstyle
{(\forall x,y\in [0,1])((\forall ^{\st}N^{0}) |x- y|<_{\R}\frac{1}{N} \di (\forall^{\st}k)|f(x)- f(y)|<_{\R}\frac1k}\big),
\]
and pushing outside all standard quantifiers, we obtain 
\[\textstyle
(\forall^{\st}k^{0}) {(\forall x,y\in [0,1])(\exists^{\st}N^{0})\underline{(|x- y|<_{\R}\frac{1}{N} \di |f(x)- f(y)|<_{\R}\frac1k}\big)},
\]
where the underlined formula is internal.  Applying the contraposition of idealisation \textsf{I} as in \eqref{pislke}: 
\[\textstyle
(\forall^{\st}k^{0})(\exists^{\st}w^{0^{*}}) {(\forall x,y\in [0,1])(\exists N^{0}\in w){(|x- y|<_{\R}\frac{1}{N} \di |f(x)- f(y)|<_{\R}\frac1k}\big)}.
\]
Now let $M$ be the maximum of all numbers in $w=(n_{0}^{0}, \dots, n_{k}^{0})$, and note that 
\[\textstyle
(\forall^{\st}k^{0})(\exists^{\st}M) {(\forall x,y\in [0,1]){(|x- y|<_{\R}\frac{1}{M} \di |f(x)- f(y)|<_{\R}\frac1k}\big)},
\]
by the monotonicity of the internal formula.  This is exactly \eqref{soareyou301}, and we are done.  
\end{proof}
Thirdly, we now introduce the nonstandard and effective versions of $\CRI$ as follows.
\begin{thm}[$\CRI_{\ns}$]
Every nonstandard uniformly continuous function on the unit interval is nonstandard Riemann integrable there.  
\end{thm}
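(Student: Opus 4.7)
The plan is to give the textbook nonstandard proof, using Theorem \ref{helful} as the engine; in particular, idealisation enters only implicitly through that normal form. I would fix a nonstandard uniformly continuous $f:\R\to\R$ on $[0,1]$ together with arbitrary $\pi,\pi'\in P([0,1])$ satisfying $\|\pi\|,\|\pi'\|\approx 0$, and aim to derive $|S_\pi(f)-S_{\pi'}(f)|<1/k$ for every standard $k$, which is exactly \eqref{soareyou5}.

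First I fix a standard $k^0$. By Theorem \ref{helful} there is a standard $N^0$ with $(\forall x,y\in[0,1])(|x-y|<1/N \to |f(x)-f(y)|<1/k)$. Because $1/(2N)$ is standard and positive and both $\|\pi\|,\|\pi'\|$ are infinitesimal, $\|\pi\|+\|\pi'\|<1/N$. The next step, and the one where care is needed, is to compare the two Riemann sums through a common refinement: I would take the refinement whose breakpoints are the sorted union of those of $\pi$ and $\pi'$, and rewrite both $S_\pi(f)$ and $S_{\pi'}(f)$ as sums over the resulting sub-subintervals $I$. On each such $I$, $\pi$ contributes $f(t_i)|I|$ and $\pi'$ contributes $f(t'_j)|I|$, where $t_i,t'_j$ are the tags of the enclosing $\pi$- and $\pi'$-subintervals; since these enclosing subintervals both meet $I$, we have $|t_i-t'_j|\le\|\pi\|+\|\pi'\|<1/N$, and hence $|f(t_i)-f(t'_j)|<1/k$ by the choice of $N$. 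Summing and using $\sum_I|I|=1$ gives
\[
|S_\pi(f)-S_{\pi'}(f)|\;\le\;\sum_I |f(t_i)-f(t'_j)|\cdot|I|\;\le\;(1/k)\cdot 1\;=\;1/k,
\]
and standardness of $k$ finishes the argument.

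The main obstacle is purely formal bookkeeping inside $\P$. The partitions $\pi,\pi'$ are nonstandard-finite sequences, so the common refinement has an internal, typically nonstandard, length, and both the telescoping identity $\sum_I|I|=1$ and the rewriting of the two Riemann sums as sums over the refinement must be carried out by internal induction on this length. None of this is conceptually deep — it uses only internal induction and Kohlenbach's representation of reals — but one has to make sure the bound $|S_\pi(f)-S_{\pi'}(f)|\le(1/k)\sum_I|I|$ is established as a genuine \emph{internal} inequality, so that the external conclusion $S_\pi(f)\approx S_{\pi'}(f)$ follows just from the standardness of $k$. Since the whole argument uses only the forward direction of Theorem \ref{helful}, elementary arithmetic, and the internal induction available in either system, it should transfer verbatim from $\P$ to $\H$.
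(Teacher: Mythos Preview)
Your argument is correct and is precisely the standard common-refinement proof; in particular, routing through Theorem~\ref{helful} to obtain a \emph{fixed} standard $N$ before summing is the right move, since the refinement has nonstandard length and one cannot simply sum nonstandardly many infinitesimals. The bound $|t_i-t'_j|\le \|\pi\|+\|\pi'\|$ and the telescoping $\sum_I |I|=1$ are both internal facts provable by internal induction, so the formal bookkeeping you flag is genuinely routine in $\P$ (and in $\H$).

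As for comparison: the paper does not actually prove $\CRI_{\ns}$ in-text. It states the principle and, in the proof of Theorem~\ref{varou}, merely records that $\CRI_{\ns}$ is provable in systems far weaker than $\P$, citing \cite{aloneatlast3}*{Theorem~19} and \cite{sambon}*{\S3.1.1} for the details. Your write-up is thus more explicit than what the paper itself offers, and is in line with the standard nonstandard-analysis proof one would expect those references to contain.
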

\begin{thm}[$\CRI_{\ef}(t)$]
For any $f:\R\di\R$ with modulus of uniform continuity $g$, the functional $t(g)$ is a modulus of Riemann integration, i.e.\ we have
\begin{align}\textstyle
(\forall \textstyle x, y \in [0,1],k)(|x-y|<\frac{1}{g(k)}&\textstyle \di |f(x)-f(y)|\leq\frac{1}{k})\label{EST}\\
&\textstyle\di  (\forall n)(\forall \pi, \pi' \in P([0,1]))\big(\|\pi\|,\| \pi'\|< \frac{1}{t(g)(n)}  \di |S_{\pi}(f)- S_{\pi'}(f)|\leq \frac{1}{n} \big).\notag
\end{align}
\end{thm}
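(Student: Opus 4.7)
The plan is to derive $\CRI_{\ef}(t)$ from $\CRI_{\ns}$ via the term extraction result in Corollary~\ref{consresultcor}, following the four-step template from Section~\ref{kintro}. First, I would prove $\CRI_{\ns}$ inside $\P$ by a standard nonstandard argument: given nonstandard uniformly continuous $f$ and partitions $\pi,\pi'$ of $[0,1]$ with $\|\pi\|,\|\pi'\|\approx 0$, passing to a common refinement and applying \eqref{soareyou4} pointwise shows that the two Riemann sums differ only by a sum of infinitesimals bounded by a standard quantity, whence $S_\pi(f)\approx S_{\pi'}(f)$.

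Second, I would bring both sides of $\CRI_{\ns}$ into normal form. The hypothesis \eqref{soareyou4} is equivalent, by Theorem~\ref{helful}, to the assertion that a standard modulus $g$ exists witnessing the (internal) antecedent of \eqref{EST}. For the conclusion \eqref{soareyou5}, I would resolve the two occurrences of $\approx$, push the outermost standard quantifier $(\forall^{\st}n)$ outside, and apply the contraposition of idealisation \textsf{I} --- exactly as in the proof of Theorem~\ref{helful} --- to convert the resulting $(\exists^{\st}m^{0^{*}})$ into a single $(\exists^{\st}M^{0})$ by taking a maximum over the finite sequence. This yields the normal form
\begin{equation*}\textstyle
(\forall^{\st}n)(\exists^{\st}M)(\forall \pi,\pi'\in P([0,1]))\bigl[\|\pi\|,\|\pi'\|<\tfrac{1}{M} \to |S_\pi(f)-S_{\pi'}(f)|\leq \tfrac{1}{n}\bigr].
\end{equation*}

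Third, combining this normal form with the internal antecedent of \eqref{EST} and prefixing $(\forall^{\st}f,g,n)$ produces a statement of shape $(\forall^{\st}f,g,n)(\exists^{\st}M)\chi(f,g,n,M)$ with $\chi$ internal, to which Corollary~\ref{consresultcor} applies. Extraction yields a closed term $s$ in $\mathcal{T}^{*}$ with $\textup{\textsf{E-PA}}^{\omega*}\vdash (\forall f,g,n)(\exists M \in s(f,g,n))\chi(f,g,n,M)$; since $\chi$ is monotone in $M$ (a stricter mesh bound only strengthens the hypothesis in the consequent), the maximum of the entries of $s(f,g,n)$ is a single genuine witness.

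The main obstacle will be arranging that $t$ depends only on $g$ rather than on $f$, as demanded by the statement of $\CRI_{\ef}(t)$. This requires verifying that in the step-one proof of $\CRI_{\ns}$ the function $f$ enters only through its modulus $g$ --- which it does, since the entire estimate on Riemann sum differences goes through \eqref{soareyou4} applied to nearby partition points, and uniform continuity controls precisely such comparisons. Consequently the extracted term can be taken uniform in $f$, and setting $t(g)(n)$ to be the above maximum yields \eqref{EST} with the required consequent, completing the proof of $\CRI_{\ef}(t)$.
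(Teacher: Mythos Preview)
Your overall strategy---prove $\CRI_{\ns}$ in $\P$, bring hypothesis and conclusion into normal form via idealisation, then apply Corollary~\ref{consresultcor} and exploit monotonicity in $M$---matches the paper's proof of Theorem~\ref{varou} (which is where $\CRI_{\ef}(t)$ is actually established; the displayed statement is just a named template). The normal form you obtain for Riemann integrability is exactly the paper's \eqref{soareyou57}.

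The genuine gap is in your treatment of the $f$-dependence. You place $f$ under a \emph{standard} quantifier, writing $(\forall^{\st}f,g,n)(\exists^{\st}M)\chi(f,g,n,M)$, extract a term $s(f,g,n)$, and then argue informally that since ``$f$ enters only through its modulus $g$'' the term ``can be taken uniform in $f$''. This is not a valid step: term extraction is a syntactic procedure on the proof, and there is no meta-theorem allowing you to discard an argument of the extracted term on the grounds that the original proof ``only uses'' $f$ via $g$. The paper avoids this problem entirely by keeping $f$ under an \emph{internal} quantifier. Concretely, after dropping the `st' on $(\forall^{\st}k)$ in the antecedent (your ``internal antecedent of \eqref{EST}''), the whole implication is internal in $f$, so one has
\[
(\forall^{\st}n,g)(\forall f)(\exists^{\st}M)\chi(f,g,n,M),
\]
and a \emph{second} application of idealisation \textsf{I} swaps $(\forall f)(\exists^{\st}M)$ to $(\exists^{\st}w)(\forall f)(\exists M\in w)$; taking the maximum of $w$ then yields $(\forall^{\st}n,g)(\exists^{\st}N)(\forall f)\chi(f,g,n,N)$, as in the paper's \eqref{RF2T}. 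Now the normal form has only $g,n$ as outer standard variables, so the extracted term is $s(g,n)$ and uniformity in $f$ is built in \emph{before} extraction rather than argued afterward. Your step~2 already uses this idealisation trick to get the Riemann-integration normal form; you need to use it once more, with $f$ playing the role of the internal variable, to close the gap.
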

%$(\forall f:\R\di \R)[\eqref{soareyou4}\di \eqref{soareyou5}]$, and let $\textsf{CRI}_{\ef}(t)$ be \eqref{EST}.  
Kohlenbach has shown that continuous real-valued functions as represented in RM (See \cite{simpson2}*{II.6.6} and \cite{kohlenbach3}*{Prop.\ 4.4}) have a modulus of (pointwise) continuity as in the antecedent of \eqref{EST}.  %We have the following theorem.    
\begin{thm}\label{varou}
From a proof of $\CRI_{\ns}$ in $\P$, a term $t$ can be extracted such that $\textup{\textsf{E-PA}}^{\omega*} $ proves $\CRI_{{\ef}}(t)$.  
\end{thm}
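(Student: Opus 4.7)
The plan is to follow the four-step template from Section~\ref{kintro}. I will bring both the hypothesis and the conclusion of $\CRI_{\ns}$ into normal form, combine them into a single normal form, invoke the term extraction result (Corollary~\ref{consresultcor}), and convert the resulting finite-sequence witness into the pointwise term $t$ by a maximum argument.

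The antecedent of $\CRI_{\ns}$, nonstandard uniform continuity~\eqref{soareyou4}, is already equivalent to the normal form~\eqref{soareyou301}, i.e.\ to $(\forall^{\st}k)(\exists^{\st}N)\,A(f, k, N)$ with $A$ internal, by Theorem~\ref{helful}. For the consequent, nonstandard Riemann integrability~\eqref{soareyou5}, I would rerun that proof \emph{mutatis mutandis}: resolve each $\approx$ into $(\forall^{\st})$ over $\N$, push $(\forall^{\st}n)$ outside the universal quantifier over partitions, use classical logic to trade $[(\forall^{\st}N)(\cdots)] \di (\cdots)$ for $(\exists^{\st}N)[(\cdots) \di (\cdots)]$, apply the contraposition of Idealisation~\textsf{I} from~\eqref{pislke} to extract a finite sequence of candidate $N$'s, and close by the observation that a smaller mesh only strengthens the conclusion. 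The output is a normal form $(\forall^{\st}n)(\exists^{\st}M)\,B(f, n, M)$ whose internal matrix $B$ reads
\[
(\forall \pi, \pi' \in P([0,1]))\bigl[\|\pi\|,\|\pi'\| <_{\R} \tfrac{1}{M} \di |S_{\pi}(f) - S_{\pi'}(f)| \leq \tfrac{1}{n}\bigr].
\]

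Next, I would collapse the implication $(\forall^{\st}k)(\exists^{\st}N)A \di (\forall^{\st}n)(\exists^{\st}M)B$ into a single normal form. Skolemising the antecedent (available in $\P$) to $(\exists^{\st}g)(\forall^{\st}k)\,A(f,k,g(k))$ rewrites the implication as $(\forall^{\st}g,n)[(\forall^{\st}k)\,A(f,k,g(k)) \di (\exists^{\st}M)\,B(f,n,M)]$, and a classical rearrangement then gives
\[
(\forall^{\st}f, g, n)(\exists^{\st}M, k)\,\Phi(f, g, n, M, k), \qquad \Phi :\equiv A(f, k, g(k)) \di B(f, n, M),
\]
which is internal. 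Applying Corollary~\ref{consresultcor} to the $\P$-proof of $\CRI_{\ns}$ in this form produces a closed term $s$ in $\T^{*}$ for which $\textup{\textsf{E-PA}}^{\omega*}$ proves $(\forall f, g, n)(\exists\langle M, k\rangle \in s(f, g, n))\,\Phi(f, g, n, M, k)$. Because $B(f, n, M)$ is monotone in $M$, whenever the antecedent of $\CRI_{\ef}(t)$ holds---i.e.\ $g$ is a genuine modulus of uniform continuity of $f$, so that $A(f, k, g(k))$ is true for every $k$---every pair in the finite list $s(f, g, n)$ automatically certifies $B(f, n, M)$. Setting $t(g)(n) := \max\{M : \langle M, k\rangle \in s(f, g, n)\}$ then supplies a single witness, yielding $\CRI_{\ef}(t)$.

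The step I expect to be the main obstacle is the third one, i.e.\ collapsing the two normal forms into a single one. The pivotal point is to arrange the internal matrix $\Phi$ so that it is monotone in the existential witness $M$, since only this monotonicity licenses the passage from the ``weak'' finite-sequence witness $s$ in~\eqref{effewachten} to the ``strong'' pointwise witness $t$; as emphasised after item~(4) of Section~\ref{kintro}, this conversion is not automatic in general. The remaining ingredients---resolving $\approx$, Skolemising, and applying Idealisation---amount to bookkeeping along the lines already displayed in the proof of Theorem~\ref{helful}.
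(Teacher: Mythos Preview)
Your overall strategy is right and mirrors the paper's, but there is a genuine gap in how you treat the function variable $f$. In $\CRI_{\ef}(t)$ the term $t$ is required to depend only on the modulus $g$ (and the accuracy $n$), \emph{not} on $f$; yet your normal form reads $(\forall^{\st}f,g,n)(\exists^{\st}M,k)\,\Phi(f,g,n,M,k)$, so Corollary~\ref{consresultcor} hands you a term $s(f,g,n)$, and your definition $t(g)(n):=\max\{M:\langle M,k\rangle\in s(f,g,n)\}$ is ill-formed: the right-hand side still contains $f$. This is not a notational slip you can absorb, because the whole point of the effective statement is uniformity in $f$ given the modulus.

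The paper fixes this in a way you do not mention: after Skolemising the antecedent with a standard $g$, it \emph{drops} the `st' in $(\forall^{\st}k)$, strengthening the hypothesis to the internal formula $(\forall k)A(f,k,g(k))$. The implication then has the shape $(\forall^{\st}g,n)(\forall f)(\exists^{\st}M)[\text{internal}]$, and a second application of Idealisation~\textsf{I} pulls $(\exists^{\st}M)$ outside the \emph{internal} quantifier $(\forall f)$, yielding a witness (and hence an extracted term) that depends only on $g,n$. Your route can be repaired without the `drop st' trick---keep $(\forall f)$ rather than $(\forall^{\st}f)$, obtain $(\forall^{\st}g,n)(\forall f)(\exists^{\st}M,k)\Phi$, and apply Idealisation to get a finite list independent of $f$---but you must recognise that this extra Idealisation step over $(\forall f)$ is essential, not ``bookkeeping''. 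Incidentally, retaining the existential $k$ as you do is exactly what produces the Herbrandisation $\CRI_{\her}(s,t)$ rather than $\CRI_{\ef}(t)$ directly; the paper notes this explicitly after Corollary~\ref{somaar}.
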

\begin{proof}
The theorem $\CRI_{\ns}$ can be proved in far weaker systems than $\P$ by \cite{aloneatlast3}*{Theorem 19}.
A variation of the latter proof may be found in \cite{sambon}*{\S3.1.1}.  We now sketch how to obtain a normal form for $\CRI_{\ns}$.   
Applying Corollary \ref{consresultcor} to this normal form will then yield $\CRI_{\ef}(t)$.  

\medskip

First of all, a normal form for uniform nonstandard continuity \eqref{soareyou4} is \eqref{soareyou301},  
while the (equivalent) normal form for nonstandard Riemann integration similarly is:  
\be\label{soareyou57}\textstyle
(\forall^{\st}n^{0})(\exists^{\st}M^{0})(\forall \pi, \pi' \in P([0,1]))\big[\|\pi\|,\| \pi'\|<\frac{1}{M}  \di |S_{\pi}(f)- S_{\pi}(f) |<_{\R}\frac1k\big].
\ee  
Secondly, in light of the previous equivalences, $\CRI_{\ns}$ is the implication $\eqref{soareyou301}\di \eqref{soareyou57}$ for all $f:\R\di \R$.  By strengthening the antecedent of the latter implication, we obtain for all $f:\R\di \R$ and all \emph{standard} $g$: 
\begin{align}\textstyle
(\forall^{\st}k)(\forall &\textstyle x, y \in [0,1])(|x-y|<\frac{1}{g(k)}\textstyle \di |f(x)-f(y)|\leq\frac{1}{k})\label{ES2T}\\
&\textstyle\di  (\forall^{\st} n)(\exists^{\st}M)(\forall \pi, \pi' \in P([0,1]))\big(\|\pi\|,\| \pi'\|< \frac{1}{M}  \di |S_{\pi}(f)- S_{\pi'}(f)|\leq \frac{1}{n} \big).\notag
\end{align}
Now drop the `st' in the `$(\forall^{\st}k)$' quantifier in \eqref{ES2T}, and bring outside all standard quantifiers to obtain:
\begin{align}\textstyle
 (\forall^{\st} n, g)(\forall f:\R\di \R)(\exists^{\st}M)\Big[(\forall k,  &\textstyle x, y \in [0,1])(|x-y|<\frac{1}{g(k)}\textstyle \di |f(x)-f(y)|\leq\frac{1}{k})\label{E32T}\\
&\textstyle\di (\forall \pi, \pi' \in P([0,1]))\big(\|\pi\|,\| \pi'\|< \frac{1}{M}  \di |S_{\pi}(f)- S_{\pi'}(f)|\leq \frac{1}{n} \big)\Big].\notag
\end{align}
Applying idealisation \textsf{(I)}, we obtain that: 
\begin{align}\textstyle
 (\forall^{\st} n, g)(\exists^{\st}w)(\forall f:\R\di \R)(\exists M\in w)\Big[(\forall k,  &\textstyle x, y \in [0,1])(|x-y|<\frac{1}{g(k)}\textstyle \di |f(x)-f(y)|\leq\frac{1}{k})\label{E312T}\\
&\textstyle\di (\forall \pi, \pi' \in P([0,1]))\big(\|\pi\|,\| \pi'\|< \frac{1}{M}  \di |S_{\pi}(f)- S_{\pi'}(f)|\leq \frac{1}{n} \big)\Big].\notag
\end{align}
Now let $N$ be the maximum of all numbers in $w$ from \eqref{E312T}, and note that 
\begin{align}\textstyle
 (\forall^{\st} n, g)(\exists^{\st}N)(\forall f:\R\di \R)\Big[(\forall k,  &\textstyle x, y \in [0,1])(|x-y|<\frac{1}{g(k)}\textstyle \di |f(x)-f(y)|\leq\frac{1}{k})\label{RF2T}\\
&\textstyle\di (\forall \pi, \pi' \in P([0,1]))\big(\|\pi\|,\| \pi'\|< \frac{1}{N}  \di |S_{\pi}(f)- S_{\pi'}(f)|\leq \frac{1}{n} \big)\Big],\notag
\end{align}
due to the monotone behaviour of the consequent.  Now apply the term extraction corollary to `$\P\vdash \eqref{RF2T}$' to obtain a term $s$ such that \textsf{E-PA}$^{\omega*}$ proves
\begin{align}\textstyle
 (\forall n, g)(\exists N\in s(g,n))(\forall f:\R\di \R)\Big[(\forall k,  &\textstyle x, y \in [0,1])(|x-y|<\frac{1}{g(k)}\textstyle \di |f(x)-f(y)|\leq\frac{1}{k})\label{RF12T}\\
&\textstyle\di (\forall \pi, \pi' \in P([0,1]))\big(\|\pi\|,\| \pi'\|< \frac{1}{N}  \di |S_{\pi}(f)- S_{\pi'}(f)|\leq \frac{1}{n} \big)\Big],\notag
\end{align}
Define $t(g,n)$ as the maximum number of $s(g, n)$, and note that \eqref{RF12T} implies $\CRI_{\ef}(t)$, again due to the monotone behaviour of the consequent.  
\end{proof}
\begin{cor}\label{cordejedi}
Theorem \ref{varou} also goes through constructively, i.e.\ we can prove $\CRI_{\ns}$ in $\textup{\textsf{H}} $ and a term $t$ can be extracted such that $\textup{\textsf{E-HA}}^{\omega*} $ proves $\CRI_{{\ef}}(t)$.  
\end{cor}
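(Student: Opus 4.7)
The plan is to retrace the proof of Theorem \ref{varou} step by step, verifying that every inference used is available in the intuitionistic system $\H$, and then to invoke the constructive version of the term extraction result, namely \cite{brie}*{Theorem 5.9}, in place of Corollary~\ref{consresultcor}.

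First I would establish $\CRI_{\ns}$ in $\H$. The argument in \cite{aloneatlast3}*{Theorem 19} (and its variant in \cite{sambon}*{\S3.1.1}) proceeds by chopping the unit interval and bounding $|S_{\pi}(f)-S_{\pi'}(f)|$ using nonstandard uniform continuity applied to an arbitrary pair of points within the same subinterval of two meshes $\|\pi\|,\|\pi'\|\approx 0$. The reasoning is essentially finitistic; it involves only arithmetic manipulations of internal sums plus the definitional unfolding of $\approx$, so no instance of excluded middle on a non-decidable formula is needed. Hence the proof is intuitionistically valid in $\H$.

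Next I would check that the normal-form manipulations in the proof of Theorem \ref{varou} survive in $\H$. The passage to \eqref{soareyou301} (Theorem \ref{helful}) and its analogue \eqref{soareyou57} for Riemann integration relies only on pushing standard quantifiers outward and on idealisation \textsf{I}, which is an axiom of $\H$ as well as of $\P$. The move from \eqref{ES2T} to \eqref{E32T} replaces $(\forall^{\st}k)$ by $(\forall k)$; this is merely a strengthening of the antecedent and is intuitionistically sound. The subsequent application of \textsf{I} to extract a finite sequence $w$ in \eqref{E312T}, and then the monotonicity collapse from a finite sequence of candidates to a single bound in \eqref{RF2T}, are both standard constructive manoeuvres: $\varphi(x,y)$ is monotone in the witness $N$, so any upper bound of the entries of $w$ works.

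Finally, \cite{brie}*{Theorem 5.9} applied to the normal form \eqref{RF2T} provably in $\H$ yields a term $s\in \T^{*}$ such that $\textup{\textsf{E-HA}}^{\omega*}$ proves the analogue of \eqref{RF12T}; defining $t(g,n)$ as the maximum entry of $s(g,n)$ and invoking the same monotonicity in the consequent delivers $\CRI_{\ef}(t)$ over $\textup{\textsf{E-HA}}^{\omega*}$. The only genuine point requiring care, and thus the main potential obstacle, is the first step: verifying that the chosen proof of $\CRI_{\ns}$ avoids any tacit use of a classical principle (for instance in deciding a strict inequality between reals). Once one has picked a proof like the one in \cite{sambon}*{\S3.1.1} that phrases all comparisons in terms of rational approximations, all subsequent steps are mechanical and the corollary follows.
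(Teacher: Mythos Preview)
Your proposal is correct and follows essentially the same approach as the paper's proof: verify that the proof of $\CRI_{\ns}$ from \cite{aloneatlast3} is constructive in the sense of $\H$, check that the normal-form manipulations leading to \eqref{RF2T} go through in $\H$, and then apply the constructive term extraction theorem for $\H$ from \cite{brie} (the paper cites it as Theorem~5.6 here, though elsewhere as Theorem~5.9). Your write-up is in fact more detailed than the paper's own sketch, which simply asserts that a careful inspection shows \eqref{RF2T} is derivable in $\H$ and defers the full argument to \cite{sambon}*{\S3.1}.
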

\begin{proof}
The proof of $\CRI_{\ns}$ in \cite{aloneatlast3} is clearly constructive in the sense of $\H$.   A careful inspection of the proof of the theorem shows that \eqref{RF2T} can also be derived 
in $\H$ from $\CRI_{\ns}$.  Applying the term extraction result for $\H$ (\cite{brie}*{Theorem 5.6}) then yields the corollary.   A full proof is in \cite{sambon}*{\S3.1}.
\end{proof}
Finally, define the \emph{Herbrandisation} of $\CRI_{\ns}$ as follows:
\begin{align}\textstyle
\textstyle~(\forall f, g ,k')\Big[(\forall k\leq s(g,k'))&(\forall \textstyle x, y \in [0,1])(|x-y|<\frac{1}{g(k)} \di |f(x)-f(y)|\leq\frac{1}{k})\tag{$\CRI_{\her}(s,t)$}\\
&\label{FEST}\textstyle\di  (\forall \pi, \pi' \in P([0,1]))\big(\|\pi\|,\| \pi'\|< \frac{1}{t(g,k')}  \di |S_{\pi}(f)- S_{\pi}(f)|\leq \frac{1}{k'} \big)  \Big]\notag
\end{align}
The Herbrandisation $\CRI_{\her}(s,t)$ follows from $\CRI_{\ns}$ in the same way as in the theorem.  In particular, we obtain the former if we do not drop the `st' in `$(\forall^{\st}k)$' to obtain \eqref{E32T}.   
We have the following corollary.
\begin{cor}\label{somaar} Let $t$ be a term in the internal language. 
A proof inside $\textup{\textsf{E-PA}}^{\omega*}$ of the Herbrandisation $\CRI_{\her}(s,t)$, can be converted into a proof inside $\P$ of $\CRI_{\ns}$.  % inside $\P_{0}$.   
\end{cor}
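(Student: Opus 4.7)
My strategy is to reintroduce, one by one, the \st\ predicates that were dropped when passing from $\CRI_{\ns}$ to the Herbrandisation: I will instantiate the universally quantified data $f,g,k'$ in $\CRI_{\her}(s,t)$ (which is internal and hence available in $\P$ as soon as it is provable in $\textup{\textsf{E-PA}}^{\omega*}$) at standard values, so that the bounded quantifier $(\forall k\leq s(g,k'))$ in its antecedent is forced to range over standard objects only.

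First, assume $f:\R\di\R$ satisfies nonstandard uniform continuity \eqref{soareyou4}. By Theorem~\ref{helful} this gives the normal form \eqref{soareyou301}, and I invoke the nonstandard countable choice principle built into $\P$ to extract a \emph{standard} functional $g$ with
\[\textstyle
(\forall^{\st}k)(\forall x,y\in [0,1])\big(|x-y|<\tfrac{1}{g(k)}\di |f(x)-f(y)|\leq\tfrac{1}{k}\big).
\]

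Second, fix any standard $k'$ and instantiate $\CRI_{\her}(s,t)$ at this $f$, $g$, $k'$. Since $s$, $g$, $k'$ are all standard, so is the natural number $s(g,k')$; hence every $k\leq s(g,k')$ is standard, as standardness is downward closed along $\leq$ in $\P$. The antecedent of $\CRI_{\her}(s,t)$ therefore follows from the displayed modulus condition, and its consequent delivers a standard mesh-bound $t(g,k')$ forcing $|S_{\pi}(f)-S_{\pi'}(f)|\leq \tfrac{1}{k'}$ on partitions of fine enough mesh. Since $k'$ was an arbitrary standard number, this establishes the normal form \eqref{soareyou57}; routinely unfolding `$\approx 0$' then converts \eqref{soareyou57} into the definition \eqref{soareyou5} of nonstandard Riemann integrability, completing the derivation of $\CRI_{\ns}$ inside $\P$.

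I expect the only nontrivial ingredient to be the first step, i.e.\ the extraction of a \emph{standard} modulus $g$ from the normal form for nonstandard uniform continuity, which rests on the standard countable choice principle of $\P$. The remaining steps amount to the observation, central to any `Herbrandisation $\Rightarrow$ nonstandard' argument, that feeding a Herbrandised statement standard data automatically confines every quantifier bounded by a standard witness to standard objects, which is precisely what the nonstandard hypothesis on $f$ is set up to supply.
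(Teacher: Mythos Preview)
Your argument is correct and follows the same line as the paper's proof, which simply notes that ``the basic axioms of $\P$ state that any term of the internal language is standard'' and declares the rest straightforward. You have unpacked that ``straightforward'' part: the point that $s(g,k')$ and $t(g,k')$ are standard once $g,k'$ are is exactly the paper's observation, and your additional step of extracting a \emph{standard} modulus $g$ from the normal form \eqref{soareyou301} via the Herbrandised choice principle $\textup{\textsf{HAC}}_{\intern}$ of $\P$ (plus monotonicity to pass from a finite list of witnesses to a single one) is precisely the detail the paper omits but which is needed to make the instantiation go through.
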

\begin{proof}
The basic axioms of $\P$ state that any term of the internal language is standard.  The rest of the corollary is now straightforward.  A full proof is in \cite{sambon}*{\S3.1}.  
\end{proof}

\subsection{The template $\CI$}\label{detail}
In this section, we formulate the template $\CI$ based on the above case study.  
We emphasize that some aspects of $\CI$ are inherently vague.  Recall from the previous section that a `normal form' is a formula of the form $(\forall^{\st}x)(\exists^{\st}y)\varphi(x, y)$ with $\varphi$ internal.    
%Recall the algorithms $\mathcal{A}$ and $\mathcal{B}$ introduced in Section \ref{kintro}.  
\begin{tempie}[$\CI$]~\rm
The starting point for $\CI$ is a theorem $T$ formulated in the language of $\textsf{E-PA}^{\omega*}$.  
\begin{enumerate}[(i)]
\item Replace in $T$ all definitions (convergence, continuity, et cetera) by their well-known counterparts from Nonstandard Analysis.  For the resulting theorem $T^{*}$, look up the proof (e.g.\ in 
\cites{loeb1,stroyan, nsawork2}) and formulate it inside $\P$ or $\H$ if possible.   
If $T^{*}$ cannot be proved in $\P$, consider $A\di T^{*}$, where $A$ is a collection of external axioms from $\IST$ to guarantee the provability in $\P$.        \label{famouslastname}
\item Bring all nonstandard definitions in $T^{*}$ into the \emph{normal form} $(\forall^{\st}x)(\exists^{\st}y)\varphi(x, y)$.  This operation usually requires $\textsf{I}$ for $\P$, and usually requires extra axioms for $\H$.    
If necessary, drop `st' in leading existential quantifiers of positively occurring formulas (like to obtain \eqref{E32T}).  \label{famousfirsdtname}
%As will become clear, the principle \emph{idealisation} $\textsf{I}$ from \textsf{IST} is essential here.  
%This can always be done in $\P$ and often in $\H$.  
\item Starting with the most deeply nested implication, bring
\be\label{duggg}
(\forall^{\st}x_{0})(\exists^{\st}y_{0})\varphi_{0}(x_{0}, y_{0})\di (\forall^{\st}x_{1})(\exists^{\st}y_{1})\varphi_{1}(x_{1}, y_{1}),
\ee
into a normal form $(\forall^{\st}x)(\exists^{\st}y)\varphi(x, y)$.  %Hence, it seems theorems formulated solely with nonstandard definitions can be brought into the latter normal form. 
%In particular, apply $\HAC_{\INT}$ to the antecedent of \eqref{duggg} and bring to the front all standard quantifiers.  The last step is trivial in $\P_{0}$ and requires the axioms from Definition \ref{flah} for $\H$.
\item Apply Corollary \ref{consresultcor} (if applicable \cite{brie}*{Theorem 5.6} for $\H$) to the proof of the normal form of $T^{*}$.  % The algorithm $\mathcal{A}$ (if applicable $\mathcal{B}$) provides a term $t$.    
\item Output the term(s) $t$ and the proof(s) of the effective version.  Modify these terms for monotone formulas if necessary.  
\end{enumerate}    
\end{tempie}
The theorems in the above case study all had proofs inside $\H$ or $\P$, i.e.\ the final sentence in step \eqref{famouslastname} does not apply.  
In Section \ref{RMSTUD}, we shall study theorems for which we \emph{do} have to add external axioms of $\IST$ to the conditions of the theorem.  
%Similarly, the final sentence of step \eqref{famousfirsdtname} does not apply to the above first three case studies, but was relevant to Picard's theorem. 

\medskip

Finally, there is a tradition of Nonstandard Analysis in RM and related topics (See e.g.\ \cites{pimpson,tahaar, tanaka1, tanaka2, horihata1, yo1, yokoyama2, yokoyama3}), which provides a source of proofs in (pure) Nonstandard Analysis for $\CI$.  To automate the process of applying $\CI$, we have initiated the implementation of the term extraction algorithm from Corollary \ref{consresultcor} in Agda, which is work in progress at this time (\cite{EXCESS}).

\section{Reverse Mathematics}\label{RM}
In this section, we first introduce the program \emph{Reverse Mathematics}, and then list results regarding the main systems consider therein.   
\subsection{Introducing Reverse Mathematics}
Reverse Mathematics (RM) is a program in the foundations of mathematics initiated around 1975 by Friedman (\cites{fried,fried2}) and developed extensively by Simpson (\cite{simpson2, simpson1}) and others.  
The aim of RM is to find the axioms necessary to prove a statement of \emph{ordinary} mathematics, i.e.\ dealing with countable or separable spaces.   
The classical\footnote{In \emph{Constructive Reverse Mathematics} (\cite{ishi1}), the base theory is based on intuitionistic logic.} base theory $\RCA_{0}$ of `computable\footnote{The system $\RCA_{0}$ consists of induction $I\Sigma_{1}$, and the {\bf r}ecursive {\bf c}omprehension {\bf a}xiom $\Delta_{1}^{0}$-CA.} mathematics' is always assumed.  
Thus, the aim of RM is as follows:  
\begin{quote}
\emph{The aim of \emph{RM} is to find the minimal axioms $A$ such that $\RCA_{0}$ proves $ [A\di T]$ for statements $T$ of ordinary mathematics.}
\end{quote}
Surprisingly, once the minimal axioms $A$ have been found, we almost always also have $\RCA_{0}\vdash [A\asa T]$, i.e.\ not only can we derive the theorem $T$ from the axioms $A$ (the `usual' way of doing mathematics), we can also derive the axiom $A$ from the theorem $T$ (the `reverse' way of doing mathematics).  In light of the latter, the field was baptised `Reverse Mathematics'.    

\medskip

Perhaps even more surprisingly, in the majority\footnote{Exceptions are classified in the so-called Reverse Mathematics Zoo (\cite{damirzoo}).
} 
of cases for a statement $T$ of ordinary mathematics, either $T$ is provable in $\RCA_{0}$, or the latter proves $T\asa A_{i}$, where $A_{i}$ is one of the logical systems $\WKL_{0}, \ACA_{0},$ $ \ATR_{0}$ or $\FIVE$.  The latter together with $\RCA_{0}$ form the `Big Five' and the aforementioned observation that most mathematical theorems fall into one of the Big Five categories, is called the \emph{Big Five phenomenon} (\cite{montahue}*{p.~432}).  
Furthermore, each of the Big Five has a natural formulation in terms of (Turing) computability (See e.g.\ \cite{simpson2}*{I.3.4, I.5.4, I.7.5}).
As noted by Simpson in \cite{simpson2}*{I.12}, each of the Big Five also corresponds (sometimes loosely) to a foundational program in mathematics.  

\medskip

The logical framework for Reverse Mathematics is \emph{second-order arithmetic}, in which only natural numbers and sets thereof are available.  As a result, functions from reals to reals are not available, and have to be represented by so-called \emph{codes} (See \cite{simpson2}*{II.6.1}).  In the latter case, the coding of continuous functions amounts to introducing a modulus of (pointwise) continuity (See \cite{kohlenbach4}*{\S4}).  The nonstandard theorems proved in the system $\P$ \emph{will not involve coding} (for continuous functions or otherwise); 
However, as will become clear below, a modulus of continuity naturally `falls out of' the nonstandard definition of continuity as in \eqref{soareyou3}.  
Thus, the nonstandard framework seems to `do the coding for us'.

\medskip

In light of the previous, one of the main results of RM is that \emph{mathematical theorems} fall into \emph{only five} logical categories.  
By contrast, there are lots and lots of (purely logical or non-mathematical) statements which fall outside of these five categories.  Similarly, most \emph{mathematical theorems} from Nonstandard Analysis 
have the normal from required for applying term extraction via Corollary~\ref{consresultcor}, while there are plenty of non-mathematical or purely logical statements which do not.  
In conclusion, the results in this paper are inspired by the \emph{Reverse Mathematics way of thinking} that mathematical theorems (known in the literature) will behave `much nicer' than arbitrary formulas (even of restricted complexity).  In particular, since there is no meta-theorem for the (Big Five and its zoo) classification of RM, one cannot hope to obtain a meta-theorem for the template $\CI$ from Section~\ref{detail}.     

\subsection{The Big Five}\label{RMSTUD}
In this section, we list the results of applying $\CI$ to equivalences involving the strongest three Big Five systems.  
We do not go into the details regarding $\WKL_{0}$ because of a lack of space.  Proofs may be found in \cite{sambon}*{\S4}.  
%\subsubsection{Theorems of the base theory $\RCA_{0}$}
%
%
%\subsubsection{Theorems equivalent to $\WKL_{0}$}

\subsubsection{Theorems equivalent to $\ACA_{0}$}\label{AZA}
In this section, we study the \emph{monotone convergence theorem} $\MCT$, i.e.\ the statement that \emph{every bounded increasing sequence of reals is convergent}, which is equivalent to arithmetical comprehension $\ACA_{0}$ by \cite{simpson2}*{III.2.2}.
We prove an equivalence between a nonstandard version of $\MCT$ and a fragment of \emph{Transfer}.
From this nonstandard equivalence, we obtain an effective RM equivalence involving $\MCT$ and arithmetical comprehension by applying $\CI$.  

\medskip

Firstly, the nonstandard version of $\MCT$ (involving nonstandard convergence) is:
\be\label{MCTSTAR}\tag{\MCT$_{\textsf{ns}}$}
(\forall^{\st} c_{(\cdot)}^{0\di 1})\big[(\forall n^{0})(c_{n}\leq c_{n+1}\leq 1)\di (\forall N,M\in \Omega)[c_{M}\approx c_{N}]    \big], 
\ee
where `$(\forall K\in \Omega)(\dots)$' is short for $(\forall K^{0}))(\neg\st(K)\di \dots)$.  
%We could have introduced a `st' in the antecedent of \ref{MCTSTAR}, but Theorem \ref{sef} would not change.  
%Note that by applying $\Omega$\textsf{-CA} from Theorem \ref{drifh} to the consequent, we immediately obtain the (standard) limit of $c_{n}$.  
The effective version \MCT$_{\textsf{ef}}(t)$ is:
\be\label{MCTSTAR22}\textstyle
(\forall c_{(\cdot)}^{0\di 1},k^{0})\big[(\forall n^{0})(c_{n}\leq c_{n+1}\leq 1)\di (\forall N,M\geq t(c_{(\cdot)})(k))[|c_{M}- c_{N}|\leq \frac{1}{k} ]   \big].
\ee
We require two equivalent (\cite{kohlenbach2}*{Prop.\ 3.9}) versions of arithmetical comprehension: % $(\exists^{2})$:
\be\label{mu}\tag{$\mu^{2}$}
(\exists \mu^{2})\big[(\forall f^{1})( (\exists n)f(n)=0 \di f(\mu(f))=0)    \big],
\ee
\be\label{mukio}\tag{$\exists^{2}$}
(\exists \varphi^{2})\big[(\forall f^{1})( (\exists n)f(n)=0 \asa \varphi(f)=0   ) \big],
\ee
and also the restriction of Nelson's axiom \emph{Transfer} as follows:
\be\tag{$\paai$}
(\forall^{\st}f^{1})\big[(\forall^{\st}n^{0})f(n)\ne0\di (\forall m)f(m)\ne0\big].
\ee
Denote by $\textsf{MU}(\mu)$ the formula in square brackets in \eqref{mu}.  We have the following theorem which establishes the explicit equivalence between $(\mu^{2})$ and uniform $\MCT$.  
%hinges on the fact that $\paai$ has a normal form.  % $(\forall^{\st}x)(\exists^{\st}y)\varphi(x, y)$. 
\begin{thm}\label{sef}
From $\P\vdash \MCT_{\ns}\asa \paai$, terms $s, u$ can be extracted such that $\textup{\textsf{E-PA}}^{\omega*}$ proves:
\be\label{frood}
(\forall \mu^{2})\big[\textsf{\MU}(\mu)\di \MCT_{\ef}(s(\mu)) \big] \wedge (\forall t^{1\di 1})\big[ \MCT_{\ef}(t)\di  \MU(u(t))  \big].
\ee
\end{thm}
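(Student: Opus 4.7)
The plan is to apply the template $\CI$ from Section~\ref{detail} to the assumed equivalence $\P\vdash \MCT_{\ns}\asa\paai$, treating the two implications separately and extracting one term per direction.

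First I would put both sides of the equivalence into normal form, following the pattern of Theorem~\ref{helful}. For $\paai$, contraposing the inner implication gives $(\forall^{\st}f^1)[(\exists m)f(m)=0 \di (\exists^{\st}n)f(n)=0]$; pushing the existential outside and applying idealisation~\textsf{I} (then taking the maximum of the resulting finite sequence) yields
\[
(\forall^{\st}f^1)(\exists^{\st}N^0)\bigl[(\exists m)f(m)=0 \di (\exists n\leq N)f(n)=0\bigr].
\]
For $\MCT_{\ns}$, resolving $\approx$, pushing out standard quantifiers and using (the contrapositive of) \textsf{I} exactly as in the proof of Theorem~\ref{helful} gives
\[
(\forall^{\st} c_{(\cdot)}^{0\di 1},k^0)(\exists^{\st}N^0)\bigl[(\forall n)(c_n\leq c_{n+1}\leq 1)\di (\forall M,M'\geq N)|c_M-c_{M'}|\leq \tfrac1k\bigr].
\]
Note that the bracketed formula in each normal form is internal and monotone in $N$.

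Next, for each direction of the equivalence I would apply step~(iii) of $\CI$ to bring the implication between two normal forms into a single normal form, then invoke Corollary~\ref{consresultcor}. From $\P\vdash \paai\di \MCT_{\ns}$ this produces a closed term $s_0$ such that $\textup{\textsf{E-PA}}^{\omega*}$ proves: whenever a functional $\Phi$ realises the normal form of $\paai$ (i.e.\ $(\forall f)(\exists n\leq \Phi(f))[(\exists m)f(m)=0 \di f(n)=0]$), then $s_0(\Phi)$ is a modulus as required by $\MCT_{\ef}$. Given any $\mu^2$ with $\textsf{MU}(\mu)$, the functional $\Phi(f):=\mu(f)$ realises that normal form, so $s(\mu):=s_0(\mu)$ satisfies $\MCT_{\ef}(s(\mu))$. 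Dually, from $\P\vdash \MCT_{\ns}\di\paai$ one extracts a closed term $u_0$ such that any modulus $t$ with $\MCT_{\ef}(t)$ gives a bound $N_{t}(f):=u_0(t)(f)$ with $(\exists m)f(m)=0 \di (\exists n\leq N_{t}(f))f(n)=0$; defining $u(t)(f)$ to be the least $n\leq N_{t}(f)$ with $f(n)=0$ (and $0$ otherwise) produces a functional with $\textsf{MU}(u(t))$. Conjoining both effective statements gives \eqref{frood}.

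The main obstacle, as is typical for $\CI$, is step~(iii): the careful bookkeeping required to transform an implication between two normal forms into a single normal form of the correct type, including the insertion of the functional $\Phi$ as a Skolem witness for the antecedent and the subsequent application of idealisation~\textsf{I} to produce finite-sequence witnesses. A minor secondary point is the passage from the finite sequence delivered by term extraction to a single numerical witness and then to a genuine Mu functional; both are routine once one notes that the consequent of the $\MCT_{\ns}$ normal form is monotone in $N$, and that a bounded search for zeros always refines to a true Mu functional via the minimum-below-the-bound construction. Everything else is an application of Corollary~\ref{consresultcor} in the pattern already carried out in Theorem~\ref{varou}.
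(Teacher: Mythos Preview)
Your proposal is correct and follows essentially the same approach as the paper, which simply says to apply the template $\CI$ to $\MCT_{\ns}\asa\paai$ (referring to \cite{samzoo}*{\S4.1} for the proof of the equivalence). You have spelled out in detail the normal forms for both $\paai$ and $\MCT_{\ns}$, the use of Corollary~\ref{consresultcor}, and the routine passage from a bound to a genuine $\mu$-functional via bounded search, all of which the paper leaves implicit in the invocation of $\CI$.
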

\begin{proof}
Apply $\CI$ to $\MCT_{\ns}\asa\paai$; The proof of the latter in \cite{samzoo}*{\S4.1} is rather elementary.  
\end{proof}

\subsubsection{Theorems equivalent to $\ATR_{0}$ and $\FIVE$}
In this section, we study equivalences relating to $\ATR_{0}$ and $\FIVE$, the strongest Big Five systems from RM.  
The associated results show that the template $\CI$ also works for the fourth and fifth Big Five system.  

\medskip

We shall work with the \emph{Suslin functional} $(S^{2})$, the functional version of $\FIVE$.  % defined as:
\be\label{suske}
(\exists S^{2})(\forall f^{1})\big[   S(f)=_{0} 0 \asa (\exists g^{1})(\forall x^{0}) (f(\overline{g}x)\ne 0)\big]. \tag{$S^{2}$}
\ee
Feferman has introduced the following version of the Suslin functional (See e.g.\ \cite{avi2}).
\be\label{suske2}
(\exists \mu_{1}^{1\di 1})\big[(\forall f^{1})\big(    (\exists g^{1})(\forall x^{0}) (f(\overline{g}x)\ne 0)\di (\forall x^{0}) (f(\overline{\mu_{1}(f)}x)\ne 0)\big)\big], \tag{$\mu_{1}$}
\ee
where the formula in square brackets is denoted $\MUO(\mu_{1})$.  We shall require another instance of \emph{Transfer}:
\be\tag{$\Paai$}
(\forall f^{1})\big[    (\exists g^{1})(\forall x^{0}) (f(\overline{g}x)\ne 0)\di  (\exists^{\st} g^{1})(\forall^{\st} x^{0}) (f(\overline{g}x)\ne 0)\big].
\ee
%The equivalence between $\Paai$ and $(S^{2})$ was proved in \cite{bennosam}.
%
%\medskip
% 
We shall obtain an effective version of the equivalence proved in \cite{yamayamaharehare}*{Theorem~4.4}.  
The relevant (non-uniform) principle pertaining to the latter is $\PST$, i.e.\ the statement that \emph{every tree with uncountably many paths has a non-empty perfect subtree}.   
The latter has the following nonstandard and effective versions. 
\begin{thm}[$\PST_{\ns}$]For all standard trees $T^{1}$, there is standard $P^{1}$ such that 
\[
(\forall f_{(\cdot)}^{0\di 1})(\exists f\in T)(\forall n)(f_{n}\ne_{1} f) \di \textup{$P$ is a non-empty perfect subtree of $T$}.
\]
\end{thm}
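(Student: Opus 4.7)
The plan is to derive $\PST_{\ns}$ in $\P$ extended with the Transfer axiom $\Paai$, mirroring the classical proof of the perfect set theorem for trees with uncountably many paths while keeping all constructions internal to the standard universe so that the perfect subtree $P$ inherits standardness from $T$.

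First I would unwind the antecedent: given a standard tree $T^{1}$ satisfying the nonstandard uncountability hypothesis $(\forall f_{(\cdot)}^{0\di 1})(\exists f\in T)(\forall n)(f_{n}\ne_{1} f)$, I would apply the contraposition of idealisation $\textsf{I}$ (as in the proof of Theorem \ref{varou}) to reshape this statement into a workable normal form, and then invoke $\Paai$ to lift the enumeration-avoidance property from standard sequences $f_{(\cdot)}$ to arbitrary ones. This is exactly the classical assertion that the set $[T]$ of paths through $T$ is not countable, which is the hypothesis under which the classical perfect set theorem applies.

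Next I would carry out the standard Cantor--Bendixson construction inside $\P+\Paai$: define the collection of \emph{splitting nodes} of $T$, namely those $\sigma\in T$ above which the tree bifurcates cofinally often, and let $P$ consist of the downward closure of the infinite splitting branches of $T$. Because $T$ is standard and $\Paai$ furnishes the arithmetical-style comprehension needed to evaluate the $\Pi_{1}^{1}$ predicate ``$\sigma$ is a splitting node'', the resulting $P$ can be defined as a standard subtree of $T$. Non-emptiness and perfectness of $P$ then follow from the classical dichotomy that a tree with no perfect kernel has at most countably many paths, contradicting the uncountability obtained in the previous step.

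The main obstacle is justifying the Cantor--Bendixson derivative within $\P+\Paai$: classically the derivative is an iteration of a $\Pi_{1}^{1}$-definable operation, and one must verify that a single application of $\Paai$ (rather than some form of $\Pi_{1}^{1}$-transfinite recursion) suffices to produce the required \emph{non-empty perfect standard} subtree. Yamada's classical argument \cite{yamayamaharehare}*{Theorem~4.4} shows that $\Pi_{1}^{1}$-$\textsf{CA}_{0}$ is strong enough, and the delicate point is to check that its nonstandard counterpart $\Paai$---whose computational content by \eqref{suske2} is Feferman's functional $\mu_{1}$---plays the analogous role in $\P$. Once $\PST_{\ns}$ is established in this way, the subsequent application of template $\CI$ should yield, in perfect analogy with Theorem \ref{sef}, explicit terms witnessing an effective equivalence between a modulus for $\PST$ and the functional $\mu_{1}$ of \eqref{suske2}.
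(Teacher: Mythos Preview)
You have misread the role of this displayed statement. In the paper, the block labelled $\PST_{\ns}$ is not a theorem carrying its own proof: it is the \emph{definition} of a named principle, displayed in the \texttt{thm} environment purely for formatting, exactly as $\CRI_{\ns}$, $\CRI_{\ef}(t)$, and $\PST_{\ef}(t)$ are. The surrounding prose makes this explicit (``The latter has the following nonstandard and effective versions'' and, immediately afterwards, ``$P$ as in the previous two \emph{principles}''). There is accordingly no proof in the paper to compare your proposal against.

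What you have actually sketched is one direction of the equivalence $\P\vdash \PST_{\ns}\asa\Paai$ that is \emph{assumed} as input in the subsequent Theorem~\ref{sef2334}; the paper does not prove that equivalence either but defers it to \cite{sambon}*{\S4}. So your work is not wasted, but it belongs to a different target. If you want to pursue it, note two issues. First, your opening move---applying idealisation and $\Paai$ to upgrade the antecedent to genuine uncountability---is not available in the order you describe: the hypothesis $(\forall f_{(\cdot)})(\exists f\in T)(\forall n)(f_{n}\ne_{1} f)$ already quantifies over \emph{all} (not just standard) enumerations, so there is nothing for $\Paai$ to lift there. Second, and more seriously, the paper's formulation of $\PST_{\ns}$ demands a standard $P$ for \emph{every} standard $T$, with the implication holding vacuously when $T$ has only countably many paths; your Cantor--Bendixson outline produces $P$ only under the uncountability hypothesis, so you still need to supply a standard witness in the countable case (any fixed standard tree will do, but this must be said). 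The substantive content---that $\Paai$ suffices to carry out the perfect-kernel construction standardly---is the right idea, and matches the classical $\FIVE$ argument you cite.
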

\begin{thm}[$\PST_{\ef}(t)$]
For all trees $T^{1}$, we have
\[
(\forall f_{(\cdot)}^{0\di 1})(\exists f\in T)(\forall n)(f_{n}\ne_{1} f) \di \textup{$t(T)$ is a non-empty perfect subtree of $T$}.
\]
\end{thm}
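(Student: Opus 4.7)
The plan is to apply template $\CI$ to $\PST_{\ns}$ in direct analogy with the treatment of $\MCT_{\ns}$ in Theorem \ref{sef}. The first step is to prove the nonstandard equivalence $\P \vdash \PST_{\ns} \asa \Paai$, adapting the classical argument from \cite{yamayamaharehare}*{Theorem 4.4} to the nonstandard framework. Intuitively, $\Paai$ supplies the $\Pi^{1}_{1}$-type comprehension needed to form a non-empty perfect subtree of any tree with uncountably many paths (by iteratively pruning non-perfect branches), while conversely a uniform perfect-subtree operator can be used to decide $\Sigma^{1}_{1}$ statements and thereby recover $\Paai$.

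Next, I would bring both sides of this equivalence into the normal form $(\forall^{\st} x)(\exists^{\st} y)\varphi(x, y)$ with $\varphi$ internal. The consequent ``$P$ is a non-empty perfect subtree of $T$'' is already internal, so $\PST_{\ns}$ is almost in normal form after Skolemising the existentials in the antecedent $(\forall f_{(\cdot)}^{0\di 1})(\exists f \in T)(\forall n)(f_{n}\ne_{1} f)$. The standard quantifier over $f_{(\cdot)}$ is pulled outside and, by applying the idealisation axiom $\textsf{I}$, is collapsed into a finite sequence which can be combined into a single enumeration, exactly as in the proof of Theorem \ref{helful}. On the $\Paai$ side, one obtains the familiar normal form involving the Feferman functional $\mu_{1}$ in the consequent.

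Once both sides are in normal form, the implication between them is again brought into normal form, and Corollary \ref{consresultcor} is applied to the proof in $\P$ of the equivalence. This yields closed terms $s, u$ of $\mathcal{T}^{*}$ such that $\textup{\textsf{E-PA}}^{\omega*}$ proves, in parallel with \eqref{frood},
\[
(\forall \mu_{1}^{1\di 1})\big[\MUO(\mu_{1}) \di \PST_{\ef}(s(\mu_{1}))\big] \wedge (\forall t^{1\di 1})\big[\PST_{\ef}(t) \di \MUO(u(t))\big].
\]
If the terms produced by Corollary \ref{consresultcor} are only ``weak witnessing'', they are converted into ``strong witnessing'' ones by taking a maximum over the finite sequences delivered, exploiting the monotone behaviour of the internal consequent.

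The main obstacle will be the higher-type nature of the antecedent of $\PST_{\ns}$: the sequence of paths $f_{(\cdot)}^{0\di 1}$ lives in type $0\di 1$, so an application of $\textsf{I}$ produces a \emph{finite sequence of sequences of paths}, and combining these into a single object requires a standard but nontrivial interleaving rather than the numerical maximum trick used for $\MCT_{\ns}$. A secondary difficulty concerns step (i) of template $\CI$: the classical construction of the perfect kernel proceeds via (transfinite) iteration of the Cantor--Bendixson derivative, which is not obviously formalisable in $\P$ alone. To carry this through one may have to invoke additional external axioms from $\IST$ such as a fragment of \emph{Standard Part}, or explicitly strengthen the hypothesis with a suitable instance of $\Paai$, consistent with step (\ref{famouslastname}) of the template.
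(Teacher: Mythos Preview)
There is a category error here: the displayed ``theorem'' $\PST_{\ef}(t)$ is not a result the paper proves at all. In this paper the \texttt{thm} environment is used uniformly to \emph{name} principles---compare the parallel blocks $\CRI_{\ns}$, $\CRI_{\ef}(t)$, $\PST_{\ns}$, $\DIV_{\ns}$, $\DIV_{\ef}(t)$---and $\PST_{\ef}(t)$ is simply the \emph{definition} of the effective (term-parametrised) version of the perfect-subtree principle. It carries no proof in the paper because there is nothing to prove: it is a schema whose instances become provable only once a specific term $t$ has been produced.

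What you have sketched is, in substance, an argument for Theorem~\ref{sef2334}, namely the extraction of terms $s,u$ from a proof of $\P\vdash \PST_{\ns}\asa \Paai$ so that $\textsf{E-PA}^{\omega*}$ proves \eqref{frood3}. That is a different statement. The paper itself states Theorem~\ref{sef2334} without proof (referring back to the template $\CI$ and to \cite{sambon}*{\S4}), so your outline is not wrong as a plan for \emph{that} theorem, but it is not a proof of the statement you were given. If asked to ``prove'' $\PST_{\ef}(t)$ in isolation, the correct response is that it is a definition; any nontrivial content lives in exhibiting a concrete $t$ and verifying the resulting instance, which is precisely the job of Theorem~\ref{sef2334}.
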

%Note that the property of being a non-empty perfect subtree consitutes an arithmetical formula as in \cite{simpson2}*{Definition V.4.1} and \eqref{nonrobustbelow} below.  
As a technicality, we require that $P$ as in the previous two principles consists of a pair $(P', p')$ such that $P'$ is a perfect subtree of $T$ such that $p'\in P'$.      
We have the following theorem.
\begin{thm}\label{sef2334}
From $\P\vdash \PST_{\ns}\asa \Paai$, terms $s, u$ can be extracted such that $\textup{\textsf{E-PA}}^{\omega*}$ proves:
\be\label{frood3}
(\forall \mu_{1})\big[\textsf{\MUO}(\mu_{1})\di \PST_{\ef}(s(\mu_{1})) \big] \wedge (\forall t^{1\di 1})\big[ \PST_{\ef}(t)\di  \MUO(u(t))  \big].
\ee
%Hence, we have the following equivalence over $\RCAO$.
%\be\label{oewa}
% \MCT_{\ns}\asa (\exists^{2})\asa (\exists t)\MCT_{\ef}.
%\ee
\end{thm}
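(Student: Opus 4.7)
The plan is to apply the template $\CI$ from Section~\ref{detail} to each of the two implications constituting the assumed equivalence $\P\vdash \PST_{\ns}\asa \Paai$, exactly as in the proof of Theorem~\ref{sef}. Step~(i) of $\CI$ is already supplied by hypothesis, so the real work is concentrated in steps~(ii)--(v): bringing both $\PST_{\ns}$ and $\Paai$ into normal form, combining these into a normal form for each implication, and invoking Corollary~\ref{consresultcor} to read off the terms.

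For the implication $\Paai \di \PST_{\ns}$ (which produces $s$), I would first observe that $\Paai$ has an internal matrix, and that pushing its inner standard quantifier outward and applying the contraposition of idealisation $\textsf{I}$ --- mirroring the derivation of \eqref{soareyou301} in the proof of Theorem~\ref{helful} --- reduces $\Paai$ in $\P$ to the existence of a standard $\mu_{1}$ satisfying $\MUO(\mu_{1})$. The statement $\PST_{\ns}$ is already of shape $(\forall^{\st}T^{1})(\exists^{\st}P^{1})\Phi(T,P)$, with $\Phi$ internal: the hypothesis on uncountably many paths is internal (only parameters $f_{(\cdot)}$ and $f$ are bound, no `st'), and "non-empty perfect subtree" is internal once $P$ is packaged with its distinguished element $p'$ as flagged before the theorem. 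Combining the two normal forms and flattening the resulting implication --- by dropping `st' in the leading standard existential of the antecedent (cf.\ the passage from \eqref{ES2T} to \eqref{E32T}), pulling all standard quantifiers outside, and applying $\textsf{I}$ --- yields a single normal form $(\forall^{\st}x)(\exists^{\st}y)\varphi(x,y)$. Corollary~\ref{consresultcor} then extracts a sequence-valued term $s_{0}$, and the term $s$ is obtained from $s_{0}$ by a selection/maximisation step, in the manner of the final step of the proof of Theorem~\ref{varou}.

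For the converse implication $\PST_{\ns}\di \Paai$ (which produces $u$), the dual procedure starts from the same two normal forms but with the term $t$ witnessing $\PST_{\ef}(t)$ supplied as a parameter; Corollary~\ref{consresultcor} then yields a term $u_{0}$ from which $u$ is read off, giving $\textup{\textsf{E-PA}}^{\omega*}\vdash \PST_{\ef}(t)\di \MUO(u(t))$. The principal obstacle is step~(ii) of $\CI$: one must certify that the clause "$P$ is a non-empty perfect subtree of $T$" in $\PST_{\ns}$ really is internal in $\P$, i.e.\ contains no hidden occurrences of `st' in its defining sub-formulas (tree property, perfectness, and membership of the packaged element $p'$). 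Once this internality is secured, steps~(iii)--(v) of $\CI$ run in parallel with the proof of Theorem~\ref{sef}, and the explicit terms $s,u$ emerge directly from the term extraction corollary applied to the supplied $\P$-proofs of the two implications.
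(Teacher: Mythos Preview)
Your proposal is correct and follows exactly the approach intended by the paper: the paper does not spell out a proof for this theorem but treats it in parallel with Theorem~\ref{sef}, i.e.\ ``apply $\CI$ to the assumed equivalence'', with full details deferred to \cite{sambon}*{\S4}. Your elaboration of steps~(ii)--(v) of $\CI$ --- reducing $\Paai$ to $(\exists^{\st}\mu_{1})\MUO(\mu_{1})$ via idealisation, noting that $\PST_{\ns}$ is already in normal form once the perfect-subtree clause is packaged with its witness $p'$, and then extracting $s,u$ from the two implications via Corollary~\ref{consresultcor} --- is precisely what that template prescribes.
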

In light of the intimate connection between theorems concerning perfect kernels of trees and the Cantor-Bendixson theorem for Baire space (See \cite{simpson2}*{IV.1}), a version of Theorem \ref{sef2334} for the former can be obtained in a straightforward way.  
Another \emph{more mathematical} statement which can be treated along the same lines is \emph{every countable Abelian group is a direct sum of a divisible and a reduced group}.  The latter is called $\DIV$ and equivalent to $\FIVE$ by \cite{simpson2}*{VI.4.1}.  By the proof of the latter, the reverse implication is straightforward;  We shall study $\DIV\di \FIVE$.  

\medskip

To this end, let $\DIV(G, D, E)$ be the statement that the countable Abelian group $G$ satisfies $G=D\oplus E $, where $D$ is a divisible group and $E$ a reduced group.  
%and $(\forall n^{0})(\forall d\in D)(ng(d,n)=d)$.  In other words, $h$ witnesses the fact that $D$ is a divisible group.  
The nonstandard version of $\DIV$ is as follows:
\be\tag{$\DIV_{\ns}$}
(\forall^{\st}G)(\exists^{\st} D, d, E)\big[\DIV(G, D, E)\wedge (D\ne\{0_{G}\}\di d\in D)\big],
\ee
where we used the same technicality as for $\PST_{\ns}$.  The effective version is:
\be\tag{$\DIV_{\ef}(t)$}
(\forall G)\big[\DIV(G, t(G)(1), t(G)(2))\wedge (t(G)(1)\ne\{0_{G}\}\di t(G)(3)\in t(G)(1))\big].
\ee
We have the following (immediate) corollary.  
\begin{cor}\label{sef23345}
From $\P\vdash\DIV_{\ns}\di \Paai$, a term $u$ can be extracted such that $\textup{\textsf{E-PA}}^{\omega*}$ proves:
\be\label{frood3555}
(\forall t^{1\di 1})\big[ \DIV_{\ef}(t)\di  \MUO(u(t))  \big].
\ee
\end{cor}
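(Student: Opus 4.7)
The plan is to apply the template $\CI$ of Section~\ref{detail} to the single implication $\DIV_{\ns}\di\Paai$, imitating the forward direction of Theorem~\ref{sef2334}. Since only one direction of the equivalence is at issue, the argument reduces to extracting the term $u$ witnessing $\MUO$ from the nonstandard proof; no analogue of the companion term $s$ of Theorem~\ref{sef2334} is needed.

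First, note that $\DIV_{\ns}$ is already in normal form $(\forall^{\st}G)(\exists^{\st}D,d,E)\varphi(G,D,d,E)$ with $\varphi$ internal. Following the passage from \eqref{ES2T} to \eqref{E32T} in the proof of Theorem~\ref{varou}, I would strengthen the antecedent of $\DIV_{\ns}\di\Paai$ by replacing the existential $(\exists^{\st}D,d,E)$ with a standard witness functional $t^{1\di 1}$, so that the hypothesis becomes the internal statement ``$\DIV(G,t(G)(1),t(G)(2))\wedge(t(G)(1)\ne\{0_{G}\}\di t(G)(3)\in t(G)(1))$ for all $G$'', i.e.\ exactly $\DIV_{\ef}(t)$ with $t$ standard. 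In parallel, the consequent $\Paai$ is brought into the normal form $(\forall^{\st}f^{1})(\exists^{\st}g^{1})\psi(f,g)$ via the idealisation manoeuvre of Theorem~\ref{helful}. The resulting implication, parameterised by the standard $t$, is then reshaped into a single normal form $(\forall^{\st}t,f)(\exists^{\st}g)\Psi(t,f,g)$ with $\Psi$ internal, by pulling standard quantifiers outward and invoking $\I$.

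Next, Corollary~\ref{consresultcor} applied to the $\P$-proof of this normal form yields a closed term $s\in\mathcal{T}^{*}$ such that $\textup{\textsf{E-PA}}^{\omega*}\vdash (\forall t,f)(\exists g\in s(t,f))\Psi(t,f,g)$. Defining $u(t)(f)$ by appropriate selection from (or assembly of) the finite sequence $s(t,f)$---exactly as in the final step of Theorem~\ref{varou}---then yields the desired implication $\DIV_{\ef}(t)\di\MUO(u(t))$.

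The main obstacle is this last selection step, namely producing the strong witness $u(t)(f)$ from the weak witness $s(t,f)$: as emphasised at the end of Section~\ref{kintro}, no general procedure is known for this passage, since ``being a path of $f$'' is not decidable. However, the very same conversion is carried out successfully for $\PST_{\ns}\di\Paai$ in Theorem~\ref{sef2334}, and the corresponding manipulation for $\DIV_{\ns}\di\Paai$ follows along precisely the same lines; this is why the corollary is labelled ``immediate''. The detailed verification is contained in \cite{sambon}.
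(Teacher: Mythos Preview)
Your proposal is correct and follows precisely the approach the paper intends: the paper gives no proof at all, labelling the result an ``(immediate) corollary'' of the method used for Theorem~\ref{sef2334}, and you have spelled out exactly that method---apply the template $\CI$ to $\DIV_{\ns}\di\Paai$ in the same way it was applied to $\PST_{\ns}\di\Paai$, with the details deferred to \cite{sambon}*{\S4}. Your identification of the weak-to-strong witness conversion as the only nontrivial step, and your observation that it is handled identically to the $\PST$ case, is apt.
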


\subsection{The Reverse Mathematics zoo}
The Reverse Mathematics zoo is a collection of theorems which do not fit the `Big Five' categories (\cite{damirzoo}).   
In \cite{samzoo, samzooII}, a variant of $\CI$ is used to classify \emph{uniform} versions of the RM zoo as equivalent to arithmetical comprehension $(\exists^{2})$.  We list the relevant results 
for one theorem from the RM zoo, namely $\DNR$ as defined below.  All known theorems from the RM zoo have been classified in the same way.  

\medskip

Thus, consider the principle \textsf{UDNR} as follows: $(\exists \Psi^{1\di1})\big[(\forall A^{1})(\forall e^{0})(\Psi(A)(e)\ne \Phi_{e}^{A}(e))\big]$.
Clearly, $\UDNR$ is the uniform version of the zoo principle\footnote{We sometimes refer to inhabitants of the RM zoo as `theorems' and sometimes as `principles'.} $\DNR$ defined as:  % which was first formulated in \cite{withgusto}.
$(\forall A^{1})(\exists f^{1})(\forall e^{0})\big[f(e)\ne \Phi_{e}^{A}(e)\big].$
The principle $\DNR$ was first formulated in \cite{withgusto} and is even strictly implied by $\textsf{WWKL}$ (See \cite{compdnr}) where the latter principle sports \emph{some} Reverse Mathematics equivalences (\cites{montahue, yuppie, yussie}) but is not a Big Five system.  Nonetheless, it is the case that $\UDNR\asa (\exists^{2})$. 
% where the second principle is the functional version of arithmetical comprehension, the third Big Five system of RM.    
%This result is similar to $\UWWKL\asa (\exists^{2})$ from \cite{yamayamaharehare}.  
In other words, \emph{the `exceptional' status of $\DNR$ disappears completely if we consider its uniform version $\UDNR$}.  
%The proof of the equivalence $\UDNR\asa (\exists^{2})$ takes place in $\RCAo$ (See Section \ref{base}), the base theory of Kohlenbach's \emph{higher-order Reverse Mathematics}.  

\medskip

To prove that $\UDNR$ is equivalent to arithmetical comprehension, we consider $\UDNR^{+}$:
\[
(\exists^{\st}\Psi^{1\di 1})\big[(\forall^{\st} A^{1})(\forall e^{0})(\Psi(A)(e)\ne \Phi_{e}^{A}(e))\wedge (\forall^{\st} C^{1}, D^{1})\big(C\approx_{1} D \di \Psi(C)\approx_{1}\Psi(D) \big)\big], 
\]
where $A\approx_{1}B$ if $(\forall^{\st}n)(A(n)=B(n))$.  
The second conjunct expresses that $\Psi$ is \emph{standard extensional}.  % (See Remark~\ref{equ}).  
%We also need the following restriction of Nelson's axiom \emph{Transfer}:
%\be\tag{$\paai$}
%(\forall^{\st}f^{1})\big[(\forall^{\st}n^{0})f(n)=0\di (\forall m)f(m)=0\big].
%\ee  
\begin{thm}\label{UDNRPLUS}
In $\P$, we have $\UDNR^{+}\asa \paai$.  
\end{thm}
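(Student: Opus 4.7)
My plan is to prove each direction of the equivalence separately in $\P$.

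For the forward direction ($\UDNR^{+}\to\paai$), I would fix a standard witness $\Psi$ to $\UDNR^{+}$ and, given a standard $f$ with $(\forall^{\st}n)f(n)\ne 0$, argue by contradiction: assume some $m$ with $f(m)=0$ and let $n_{0}$ be the least such, which is necessarily nonstandard. For each standard $k$, form the oracle $C_{k}:=B_{k}\oplus f$, where $B_{k}(n)=k$ if $n$ is the least zero of $f$ and $B_{k}(n)=0$ otherwise. Each $C_{k}$ is standard because it is defined by an internal formula from the standard parameters $f$ and $k$. Since $n_{0}$ is nonstandard, $B_{k}$ vanishes at every standard $n$, so $C_{k}\approx_{1}C_{0}$. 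Let $e^{*}$ be the standard G\"odel number of the program ``locate the least zero of the $f$-component of the oracle and output the $B$-component at that position''; then $\Phi_{e^{*}}^{C_{k}}(e^{*})=k$. The DNR clause of $\UDNR^{+}$ yields $\Psi(C_{k})(e^{*})\ne k$ for every standard $k$, while standard extensionality forces $\Psi(C_{k})(e^{*})=\Psi(C_{0})(e^{*})$; call this common value $k^{*}$, which is a standard natural number since $C_{0}$, $\Psi$ and $e^{*}$ are all standard. Taking $k:=k^{*}$ gives the contradiction $k^{*}\ne k^{*}$, whence $\paai$ follows.

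For the reverse direction ($\paai\to\UDNR^{+}$), I would first apply $\paai$ to the standard indicator of equality to conclude that any standard $C$, $D$ with $C\approx_{1}D$ are in fact equal; this makes standard extensionality automatic for every standard functional. It then remains to exhibit a standard DNR functional, which one defines by setting $\Psi(A)(e):=\Phi_{e}^{A}(e)+1$ when $\Phi_{e}^{A}(e)$ is defined and $\Psi(A)(e):=0$ otherwise. The case split requires a standard $\mu^{2}$, which is available in $\P+\paai$ via (a standard analogue of) the correspondence underlying Theorem~\ref{sef}: $\paai$ is a form of $\Pi_{1}^{0}$-transfer supplying the minimisation functional needed to decide Kleene convergence uniformly in the oracle.

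The main obstacle sits in the reverse direction, namely verifying that the case split defining $\Psi$ is genuinely \emph{internal} to $\P+\paai$, that is, extracting from $\paai$ the standard arithmetical strength to decide Kleene convergence uniformly in the oracle. In the forward direction, by contrast, the key combinatorial idea is the simultaneous diagonalisation against the standard-parameter family $\{C_{k}\}$ of oracles; it hinges on the observation that the nonstandard position $n_{0}$ is invisible both to $\approx_{1}$ and to the standardness predicate, which is precisely what permits a single standard value $k^{*}$ to be pinned down by standard extensionality and then refuted by DNR.
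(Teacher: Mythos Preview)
The paper does not actually include a proof of this theorem, deferring instead to \cite{samzoo}; so I can only assess your proposal on its own merits.

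Your forward direction is correct and elegantly set up: the family $\{C_{k}\}_{k}$ of standard oracles, all $\approx_{1}$-equivalent to $C_{0}$ yet separated by $\Phi_{e^{*}}$ at the invisible nonstandard coordinate $n_{0}$, is exactly the right diagonal device to play the DNR clause off against standard extensionality, and every step (standardness of $C_{k}$, standardness of $e^{*}$, standardness of $k^{*}$) checks out.

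Your reverse direction, however, has a real gap precisely at the point you flag as the main obstacle. The claim that $\paai$ supplies a standard $\mu^{2}$ in $\P$ is not quite right. What one extracts from $\paai$ (via its normal form $(\forall^{\st}f)(\exists^{\st}n)[(\exists m)f(m){=}0\to f(n){=}0]$ and $\HAC_{\INT}$) is a standard functional $\mu_{0}$ satisfying the minimisation property \emph{only for standard inputs}: one gets $(\forall^{\st}f)[(\exists n)f(n){=}0\to f(\mu_{0}(f)){=}0]$, and upgrading this to all $f$ would require transferring a formula with a type-two parameter and a leading type-one universal quantifier, which $\paai$ does not provide. Consequently, your $\Psi$ built from $\mu_{0}$ is only guaranteed to perform the correct case split when both $A$ and $e$ are standard; for standard $A$ but nonstandard $e$ the auxiliary $h_{A,e}$ need not be standard, $\mu_{0}$ may miss a halting computation, and $\Psi(A)(e)$ could then coincide with $\Phi_{e}^{A}(e)$.

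The repair is short but essential: first verify that your $\Psi$ is DNR for standard $A$ and \emph{standard} $e$ (this does follow, since $h_{A,e}$ is then standard); next observe that for fixed standard $A$ the DNR condition $(\forall e)(\forall s)[T(A,e,s)\to \Psi(A)(e)\ne U(A,e,s)]$ is $\Pi_{1}^{0}$ with only the standard type-one parameters $A$ and $\Psi(A)$; finally apply $\paai$ a \emph{second} time to drop the `st' from the leading $(\forall^{\st}\langle e,s\rangle)$. That second invocation of $\paai$ is the step missing from your sketch.
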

%For the following theorem, we require Feferman's mu-operator:
%\be\label{mu}\tag{$\mu^{2}$}
%(\exists \mu^{2})\big[(\forall f^{1})( (\exists n)f(n)=0 \di f(\mu(f))=0)    \big], 
%\ee
%which is equivalent to $(\exists^{2})$ over $\RCAo$ by \cite{kohlenbach2}*{Prop.\ 3.9}.   As to notation, denote by $\textsf{MU}(\mu)$ the formula in square brackets in \eqref{mu} and 
Denote by $\UDNR(\Psi)$ the formula in square brackets in \textsf{UDNR}.  
%We have the following theorem.  % which hinges on the fact that $\paai$ has a normal form.  % $(\forall^{\st}x)(\exists^{\st}y)\varphi(x, y)$. 
\begin{thm}\label{sef2}
From $\P\vdash\UDNR^{+}\asa \paai$ terms $s, u$ can be extracted such that $\textsf{\textup{E-PA}}^{\omega*}$ proves:
\be\label{frood9}
(\forall \mu^{2})\big[\textsf{\MU}(\mu)\di \UDNR(s(\mu)) \big] \wedge (\forall \Psi^{1\di 1})\big[ \UDNR(\Psi)\di  \MU(u(\Psi, \Xi))  \big],
\ee
where $\Xi$ satisfies  $(\forall A^{1}, B^{1}, k^{0})(\overline{A}\Xi(A, b, k)=\overline{B}\Xi(A, B, k)  \di \Psi(A)(k)=\Psi(B)(k))   $.
%Hence, we have the following equivalence over $\RCAO$.
%\be\label{oewa}
% \MCT_{\ns}\asa (\exists^{2})\asa (\exists t)\MCT_{\ef}.
%\ee
\end{thm}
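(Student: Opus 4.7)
The plan is to apply the template $\CI$ (Section \ref{detail}) to the equivalence $\UDNR^{+} \asa \paai$, which is provable in $\P$ by Theorem \ref{UDNRPLUS}. Each direction will be brought into a normal form $(\forall^{\st}x)(\exists^{\st}y)\varphi(x,y)$ and fed to Corollary \ref{consresultcor} to extract the terms $s$ and $u$; the bookkeeping mirrors the treatment of $\MCT_{\ns} \asa \paai$ in Theorem \ref{sef}, except that the object witnessing $\UDNR^{+}$ is a higher-type functional rather than a single sequence.

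For the direction $\paai \di \UDNR^{+}$, I would first unfold $\paai$, then push the outer standard existential on $\Psi$ from $\UDNR^{+}$ past the standard-extensionality conjunct, handle the free variable $A$ inside $\UDNR(\Psi)$ by the usual monotonicity argument, and finally apply idealisation \textsf{I} to collapse the finite sequence of witnesses produced this way into a single one. Applying Corollary \ref{consresultcor} to the resulting normal form yields a closed term $s$ such that whenever $\mu^{2}$ satisfies $\MU(\mu)$, the functional $s(\mu)$ satisfies $\UDNR(s(\mu))$. The extensionality conjunct of $\UDNR^{+}$ is here automatic because $s(\mu)$ is obtained by explicit computation from the extensional primitive $\mu$.

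For the direction $\UDNR^{+} \di \paai$, which is the more delicate half, I would track each use of the standard-extensionality clause of $\UDNR^{+}$ inside the nonstandard proof. After resolving ``$\approx_{1}$'' and applying idealisation, any instance of ``$C \approx_{1} D \di \Psi(C) \approx_{1} \Psi(D)$'' becomes a finite-initial-segment statement of the form ``$\overline{C}N = \overline{D}N \di \Psi(C)(k) = \Psi(D)(k)$'' for a sufficiently large $N$. In the effective version this very $N$ is what $\Xi$ supplies, according to its defining property in the statement of the theorem. Applying Corollary \ref{consresultcor} to the normal form produced this way yields a term $u$ such that, given both a witness $\Psi$ for $\UDNR(\Psi)$ and a modulus of extensionality $\Xi$, $u(\Psi, \Xi)$ satisfies $\MU$.

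The main obstacle is precisely the bookkeeping of extensionality in the forward direction. In the nonstandard setting, standard extensionality of $\Psi$ is a single external clause of $\UDNR^{+}$, whereas after term extraction one needs a \emph{uniform} effective witness at each depth $k$. Converting the former into the latter requires identifying, at every invocation of extensionality in the proof of Theorem \ref{UDNRPLUS}, both the depth $k$ at which outputs are compared and the input at which $\Psi$ is evaluated; the functional $\Xi$ is engineered precisely to encapsulate this data, so that once the normal form is set up this way the term extraction becomes routine and \eqref{frood9} follows.
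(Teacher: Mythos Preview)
Your proposal is correct and follows the paper's own approach: the paper does not spell out a proof here but simply refers to the template $\CI$ and to \cite{samzoo}, and your sketch is precisely an instantiation of $\CI$ to the equivalence of Theorem~\ref{UDNRPLUS}, in direct analogy with Theorem~\ref{sef}. In particular, your identification of $\Xi$ as the effective residue of the standard-extensionality conjunct of $\UDNR^{+}$ after idealisation is exactly the point the paper singles out in the sentence following the theorem (``$\Xi$ is an extensionality functional for $\Psi$'').
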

In the previous theorem, we say that $\Xi$ \emph{is an extensionality functional} for $\Psi$, as the former witnesses the axiom of extensionality for the latter.  
Proofs of the previous theorems may be found in \cite{samzoo}, while a general template to similarly treat theorems from the Reverse Mathematics zoo may be found in \cite{samzoo, samzooII}.
As it turns out, these proofs also go through relative to Heyting arithmetic (\cite{samzooII}).

\section{The Gandy-Hyland functional}\label{GH2}
In this section, we apply $\CI$ to computability theory by studying the \emph{Gandy-Hyland functional}.  Proofs and additional results may be found in \cite{samGH}.   
\subsection{Introducing the Gandy-Hyland functional $\Gamma$}
The Gandy-Hyland functional was introduced in \cite{gandymahat} as an example of a higher-type functional 
not computable, in the sense of Kleene's S1-S9 (See \cite{noortje}*{1.10} or \cite{longmann}*{5.1.1}), in the \emph{fan functional} over the total continuous functionals (See \cite{noortje}*{4.61} or \cite{longmann}*{8.3.3}).   
The {Gandy-Hyland functional} $\Gamma$ is:  
\begin{equation}\label{GH}\tag{\textsf{GH}}
(\exists \Gamma^{3})(\forall Y^{2}\in C, s^{0})\big[\Gamma(Y^{2},s^{0})= Y\big(s*0* (\lambda n^{0})\Gamma(Y, s*(n+1))\big)\big], 
\end{equation}
where `$Y^{2}\in C$' is the usual definition of pointwise continuity on Baire space as in \eqref{CB2}.   
We adopt the usual notational conventions as in e.g.\ \cite{bergolijf}.
\be\label{CB2}
(\forall f^{1})(\exists N^{0})(\forall g^{1})(\overline{f}N=_{0}\overline{g}N\di Y(f)=_{0}Y(g)).  
\ee
The functional $\Gamma$ from \eqref{GH} apparently exhibits non-well-founded self-reference:  Indeed, in order to compute $\Gamma$ at $s^{0}$, one needs the values of $\Gamma$ at all child nodes of $s^{0}$, as is clear from the right-hand side of \eqref{GH}.  
In turn, to compute the value of $\Gamma$ at the child nodes of $s$, one needs the value of $\Gamma$ at all grand-child nodes of $s$, and so on.  Hence, repeatedly applying the definition of $\Gamma$  seems to result in a non-terminating recursion.       
By contrast, \emph{primitive recursion} is well-founded as it reduces the case for $n+1$ to the case for $n$, and the case for $n=0$ is given.  

\medskip

As it turns out, the Gandy-Hyland functional as in \eqref{GH} can be approximated in Nonstandard Analysis by the following \emph{primitive recursive}\footnote{The functional $G$ is primitive recursive \emph{in the sense of G\"odel's system} $\textsf{T}$ by \cite{escaleert}*{Theorem 18}.} functional:
%In Section \ref{deballen}, we show that inside Nelson's \emph{internal set theory}, introduced in Section~\ref{P}, the following \emph{primitive recursive}\footnote{The functional $G$ is primitive recursive \emph{in the sense of G\"odel's system} $\T$;  See Section \ref{TITI}.} functional 
\be\label{small}
G(Y,s,M)=
\begin{cases}
Y(s*00\dots) & |s|\geq M \\
Y(s*0* (\lambda n^{0})G(Y, s*(n+1),M)) & \textup{otherwise}   
\end{cases}
\ee
%\[
%%\be\label{ngandy2}
%H(Y^{2},s^{0},M)=
%\begin{cases}
%Y(\overline{s}M*00\dots) & |s|\geq M \\ 
%Y(s*0* H(Y, s*1,M )*\dots  *H(Y, s*M,M )*00\dots) & \textup{otherwise}   
%\end{cases}
%%\ee
%\]
Indeed, $G$ as in \eqref{small} equals the $\Gamma$-functional from \eqref{GH} for standard input and any \emph{nonstandard number} $M^{0}$ (See Section \ref{korium}).  
Note that one need only apply the definition of $G$ at most $M$ times to terminate in the first case of \eqref{small}.  In other words, the extra case `$|s|\geq M$' 
provides a nonstandard stopping condition which `unwinds' the non-terminating recursion in $\Gamma$ to the terminating one in $G$.  Or: one can trade in self-reference for nonstandard numbers.  
Thus, we shall refer to $G$ as the \emph{canonical approximation} of $\Gamma$.    

\medskip

To be absolutely clear, all systems mentioned in this paper deal with \emph{total functionals only}.  In particular, in the system $\P+\eqref{GH}$, there is a functional $\Gamma^{3}$ which behaves as 
described in \eqref{GH} for $Y^{2}\in C$, while $\Gamma(Z, s)$ is a natural number for discontinuous $Z^{2}$ and $s^{0}$, but we have no additional information.  The same convention applies to the 
modulus-of-continuity functional defined in the next section.  

\subsection{Term extraction and $\Gamma$}\label{korium}
In this section, we show that $\Gamma$ equals its canonical approximation $G$ {assuming certain fragments of \emph{Transfer} and \emph{Standard Part}} as in Theorem \ref{Cruxxxx}.  Applying $\CI$ to this result, one obtains a term $t$ expressing the Gandy-Hyland functional in terms of the \emph{modulus-of-continuity functional} and a special case of the fan functional, as in Corollary \ref{TE}.  
To this end, we need the following nonstandard axioms.  
%The Gandy-Hyland is unique as noted in \cite{gandymahat}*{\S6} and \cite{longmann}*{\S8.3.3}.  We now prove a similar result, for which we require the following formula:
\be\tag{$\GH_{\st}(\Gamma)$}
(\forall^{\st} Y^{2}\in C, s^{0})\big[\Gamma(Y^{2},s^{0})= Y\big(s*0* (\lambda n^{0})\Gamma(Y, s*(n+1))\big)\big].
\ee
\be\tag{\textup{$\sigtoe$}}\label{predruk}
(\forall^{\st}f^{1})\big[(\exists m^{0})(\forall n^{0})f(m,n)=0\di (\exists^{\st} k^{0})(\forall l^{0})f(k,l)=0\big].
\ee
\be\label{STP}\tag{\textup{\textsf{STP}}}
(\forall f^{1}\leq_{1}1)(\exists^{\st} g^{1}\leq_{1}1)(f\approx_{1}g).
\ee   
The notation $f\approx_{1}g$ stands for `$(\forall^{\st}n^{0})(f(n)=_{0}g(n))$', and the the function $g^{1}$ from \ref{STP} is called a \emph{standard part} of $f^{1}$.  Clearly, \eqref{STP} is a fragment of \emph{Standard Part}, while $\sigtoe$ is a fragment of \emph{Transfer}.  Proofs of the following theorem and corollary may be found in \cite{samGH}*{\S4.1}.    
%By the following theorem, $\STP$ is a nonstandard version of \emph{weak K\"onig's lemma}, abbreviated $\WKL$.  
%\eqref{GH} with the leading existential quantifier omitted.  
\begin{thm}\label{Cruxxxx}
In $\P+\sigtoe+\STP$, the Gandy-Hyland functional exists and equals its canonical approximation, i.e.\ there is standard $\Gamma^{3}$ such that $\GH_{\st}(\Gamma)$ and
\be\label{crubm}\tag{\textsf{\textup{CA}}$(\Gamma)$}
(\forall^{\st}Y^{2}\in C,s^{0})(\forall N\in \Omega)(G(Y,s,N)=\Gamma(Y,s)).
\ee
Furthermore, the Gandy-Hyland functional is unique, i.e.\ $(\forall \Gamma_{1}^{3})(\GH_{\st}(\Gamma_{1})\di \textsf{\textup{CA}}(\Gamma_{1}))$.
% in that every functional $\Gamma$ such that \textsf{\textup{GH}}$_{\st}(\Gamma)$ satisfies \ref{crubm}.  
\end{thm}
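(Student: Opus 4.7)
The plan is to use the primitive-recursive functional $G$ as a witness for $\Gamma$: I will show that for standard $Y\in C$ and standard $s$, the value $G(Y,s,N)$ does not depend on the choice of sufficiently large (in particular nonstandard) $N$, define $\Gamma$ to be this common value, and read off $\GH_{\st}(\Gamma)$ and $\textsf{\textup{CA}}(\Gamma)$ from the stabilisation.

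First I would extract a \emph{standard modulus of continuity} for each standard $Y\in C$. Instantiated at a standard $f$, the continuity hypothesis $(\exists N)(\forall g^{1})(\overline{f}N=\overline{g}N\to Y(f)=Y(g))$ is a $\Sigma_{2}^{0}$-statement in $N$ with standard parameters, and $\sigtoe$ delivers a standard witness $\omega_{Y}(f)$. The axiom $\STP$ plays a complementary role: when $Y$ is eventually applied to a bounded nonstandard function $h$ arising inside $G$, $\STP$ produces a standard $g\approx_{1}h$, and the \emph{standard} modulus $\omega_{Y}(g)$ forces $g$ and $h$ to agree on sufficiently many coordinates for $Y(g)=Y(h)$. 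This is the mechanism by which continuity of $Y$ is transferred to the a priori nonstandard functions $\lambda n.G(Y,s*(n+1),M)$ that appear when $M$ is nonstandard.

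The key technical step is the \emph{stabilisation lemma}: for standard $Y\in C$ and standard $s$ there is a standard $M_{0}=M_{0}(Y,s)$ such that $G(Y,s,M)=G(Y,s,M_{0})$ for every $M\geq M_{0}$. The proof proceeds by backwards well-founded induction on a depth measure derived from the pointwise modulus $\omega_{Y}$. In the base case, where $|s|$ already exceeds the modulus at the relevant argument, either $|s|\geq M$ and $G$ returns $Y(s*00\ldots)$ directly, or one unfolding gives $Y(s*0*\lambda n.G(Y,s*(n+1),M))$, which equals $Y(s*00\ldots)$ by continuity since the two arguments agree on the first $|s|$ coordinates. For shorter $s$, one unfolds once more and applies the induction hypothesis at the children $s*(n+1)$, combining the child values via continuity of $Y$. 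Since every nonstandard $M$ exceeds the standard $M_{0}$, the lemma immediately yields $G(Y,s,M)=G(Y,s,M')$ for all nonstandard $M,M'$.

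With the stabilisation lemma in hand, I would now define $\Gamma(Y,s):=G(Y,s,M_{0}(Y,s))$ for $Y\in C$ and arbitrarily otherwise; this is a standard type-$3$ functional because $M_{0}$ is. The equation $\GH_{\st}(\Gamma)$ then follows by one unfolding of $G$ (legal since $|s|<M_{0}$ at standard $s$) and rewriting each child value $G(Y,s*(n+1),M_{0})$ as $\Gamma(Y,s*(n+1))$ via stabilisation, with the $\STP$-based reduction above handling the nonstandard nature of $\lambda n.G(\ldots)$. The principle $\textsf{\textup{CA}}(\Gamma)$ is precisely the stabilisation lemma read at nonstandard $N\in\Omega$. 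For uniqueness, any standard $\Gamma_{1}$ with $\GH_{\st}(\Gamma_{1})$ must satisfy $\Gamma_{1}(Y,s)=Y(s*00\ldots)$ for standard $s$ whose length exceeds the modulus (by the same continuity argument), and a backwards induction using $\GH_{\st}(\Gamma_{1})$ then forces $\Gamma_{1}=\Gamma$ on standard inputs. The main obstacle I expect is the coordinated use of $\sigtoe$ and $\STP$ to produce a standard modulus functional that remains usable when $Y$ is applied to nonstandard arguments generated inside $G$; once that is secured, the rest is a careful but essentially routine iteration of the continuity hypothesis.
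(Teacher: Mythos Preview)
The paper does not actually prove this theorem; it only refers the reader to \cite{samGH}*{\S4.1}. So there is no detailed argument here to compare against. Your overall architecture---establish a stabilisation lemma for $G(Y,s,M)$ and define $\Gamma$ as the stable value---is the natural one, and the shape of Corollary~\ref{TE} (a term in $\mu^{2}$ and the special fan functional $\Theta$) confirms that the proof is organised around a modulus-of-continuity ingredient coming from $\sigtoe$ and a compactness ingredient coming from $\STP$.

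That said, three steps in your sketch are not yet justified and are not routine. First, the formula $(\exists N^{0})(\forall g^{1})(\overline{f}N=\overline{g}N\to Y(f)=Y(g))$ is \emph{not} a $\Sigma_{2}^{0}$-statement in the sense of $\sigtoe$: the inner universal quantifier ranges over type-$1$ objects, whereas $\sigtoe$ as formulated in the paper has both quantifiers of type~$0$. You cannot invoke $\sigtoe$ directly to obtain a standard modulus $\omega_{Y}(f)$; some intermediate reduction (e.g.\ through $\paai$ and $(\mu^{2})$, or by first deriving nonstandard continuity $\NSC(Y)$ in the style of Theorem~\ref{ragnor}) is required. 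Second, $\STP$ as stated applies only to binary sequences $f\leq_{1}1$, but the function $h=\lambda n.G(Y,s*(n+1),M)$ takes arbitrary values in $\N$ and need not be bounded, so your appeal to $\STP$ to produce a standard $g\approx_{1}h$ is not immediate; you must either bound $h$ first (using the modulus) or explain the coding that extends $\STP$ to this case. Third, your ``backwards well-founded induction on a depth measure derived from $\omega_{Y}$'' ranges over an infinitely branching tree, and there is no a~priori reason $\omega_{Y}(s*00\dots)-|s|$ (or any similar quantity) decreases along children $s*(n+1)$; securing well-foundedness here is precisely where a fan-type argument derived from $\STP$ must do real work, and this is the step that deserves the most detail.
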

To apply term extraction to Theorem \ref{Cruxxxx}, the following principles are needed.   
\be
(\forall Y^{2}\in C, f^{1}, g^{1})(\overline{f}\Psi(Y, f)=\overline{g}\Psi(Y, f)\notag \di Y(f)=Y(g)). \label{lukl2}\tag{$\textsf{\textup{MPC}}(\Psi)$}
\ee
%\be\tag{$\MU(\mu)$}
%(\forall f^{1})\big[(\exists n^{0})f(n)=0\di f(\mu(f))=0\big].
%\ee
Note that $\MPC(\Psi)$ states that $\Psi^{3}$ is a modulus-of-continuity functional, while $\MU(\mu)$ states that $\mu^{2}$ is Feferman's search operator (See e.g.\ \cite{kohlenbach2} for the latter).  
We also need the following functional.    
\begin{align}\label{fanns333}\tag{$\SCF(\Theta)$}
(\forall g^{2}, T^{1}\leq_{1}1)\big[ (\forall  \alpha^{1}\in \Theta(g)(2))(\alpha\leq_{1}1\di& \overline{\alpha}g(\alpha)\not\in T)\di\\
 &(\forall \beta\leq_{1}1)(\exists i\leq_{0}\Theta(g)(1))(\overline{\beta}i\not\in T) \big].\notag
\end{align}
The functional $\Theta^{3}$ as in $\SCF(\Theta)$ is called \emph{the special fan functional}, and its properties are discussed in Section \ref{norke}.  
For now, it suffices to know that the special functional is part of classical and Brouwerian intuitionistic mathematics.  
Note that there is no unique $\Theta$ as in $\SCF(\Theta)$, i.e.\ it is in principle incorrect to talk about `the' special fan functional.    
 %$\MUC(\Omega)$
Finally, let $\GH(\Gamma)$ be \eqref{GH} with the leading quantifier omitted.  
\begin{cor}[Term Extraction]\label{TE}
From the proof in $\P$ of 
\be\label{nsaversion}
\ref{predruk}+\ref{STP}\di (\forall \Gamma^{3})\big[\textsf{\textup{GH}}_{\st}(\Gamma)\di \textup{\textsf{CA}}(\Gamma)   \big], 
\ee
a term $t^{4}$ can be extracted such that \textsf{\textup{E-PA}}$^{\omega*}+\QFAC^{1,0}$ proves that
\begin{align}\label{consversion}
(\forall \mu^{2} , \Theta^{3},\Gamma^{3} )\big[\big(\textsf{\textup{GH}}(\Gamma)  \wedge &\textup{\textsf{MU}}(\mu)\wedge \SCF(\Theta)\big) \di (\forall Y^{2} \in C, s^{0})\big(G(Y, s, t(Y,s, \mu,\Theta))=\Gamma(Y, s) \big)\big], \notag
\end{align}
i.e.\ $G(Y, s, t(Y, s, \mu,\Theta ))$ is the Gandy-Hyland functional expressed in terms of Feferman's search operator and the special fan functional.  
\end{cor}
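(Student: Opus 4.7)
The plan is to apply the template $\CI$ from Section \ref{detail} directly to the nonstandard implication \eqref{nsaversion}, whose proof in $\P$ is furnished by Theorem \ref{Cruxxxx}. The template will then produce the term $t$ mechanically, provided both the antecedent $\sigtoe+\STP$ and the consequent $(\forall\Gamma^{3})[\GH_{\st}(\Gamma)\to \textsf{CA}(\Gamma)]$ can be cast in the single normal form $(\forall^{\st}\underline{x})(\exists^{\st}\underline{y})\varphi(\underline{x},\underline{y})$ required by Corollary \ref{consresultcor}.

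First, I would normalize each ingredient individually. The axiom $\sigtoe$ is already of the shape $(\forall^{\st}f)(\exists^{\st}k)\varphi$, with (an extension of) Feferman's search operator $\mu^{2}$ as specified by $\MU(\mu)$ as Skolem realizer. For $\STP$, expanding $\approx_{1}$ in $(\forall f\leq_{1}1)(\exists^{\st}g\leq_{1}1)(f\approx_{1}g)$ and applying the idealization axiom $\textsf{I}$ yields a normal form whose Skolem functional is exactly the special fan functional $\Theta^{3}$ satisfying $\SCF(\Theta)$. The subformula $\GH_{\st}(\Gamma)$ is already of the form $(\forall^{\st}Y\in C, s)(\text{internal})$, while $\textsf{CA}(\Gamma)$ unfolds to $(\forall^{\st}Y\in C, s^{0})(\forall N^{0})(\neg\st(N)\to G(Y,s,N)=\Gamma(Y,s))$ and then, by item 3 of the observations in Section \ref{kintro}, becomes the normal form $(\forall^{\st}Y\in C, s)(\exists^{\st}M^{0})(\forall N\geq M)(G(Y,s,N)=\Gamma(Y,s))$; this rewriting relies on the monotone behaviour of $G(Y,s,N)$ in $N$ for standard continuous $Y$, i.e.\ stabilization once $N$ is sufficiently large.

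Second, I would combine the pieces into a single normal form. Item 2 of Section \ref{kintro} turns the implication $\GH_{\st}(\Gamma)\to\textsf{CA}(\Gamma)$ into a normal form, and the outer universal quantifier over $\Gamma$ together with the standard realizers for $\sigtoe$ and $\STP$ can be absorbed into a single standard universal prefix, yielding
\[
(\forall^{\st}\mu^{2},\Theta^{3},\Gamma^{3}, Y^{2}\!\in\! C, s^{0})(\exists^{\st}M^{0})\,\psi(\mu,\Theta,\Gamma,Y,s,M),
\]
where $\psi$ is an internal formula expressing that if $\MU(\mu)\wedge\SCF(\Theta)\wedge\GH(\Gamma)$ then $G(Y,s,M)=\Gamma(Y,s)$. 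Corollary \ref{consresultcor} then produces a term $s^{*}$ in $\mathcal{T}^{*}$ with $\textup{\textsf{E-PA}}^{\omega*}+\QFAC^{1,0}\vdash(\forall\mu,\Theta,\Gamma,Y,s)(\exists M\!\in\! s^{*}(\mu,\Theta,Y,s))\,\psi$, and I would define $t(Y,s,\mu,\Theta)$ as the maximum entry of the finite sequence $s^{*}(\mu,\Theta,Y,s)$, again exploiting the monotone behaviour of the consequent in $M$.

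The main obstacle is the monotonicity property underwriting the normalization of $\textsf{CA}(\Gamma)$: the claim that $G(Y,s,N)$ eventually stabilizes in $N$ for standard continuous $Y$ follows inside $\P$ from the continuity of $Y$ and the recursive clause of \eqref{small}, but in the effective conclusion the stabilization threshold must be computed uniformly in $(Y,s)$ from $\mu$ and $\Theta$ alone, without reference to a separate modulus-of-continuity functional $\Psi$ satisfying $\MPC(\Psi)$. In principle this is possible: $\mu$ extracts pointwise moduli of continuity from $Y$ and $\Theta$ handles the bounded subtrees generated by the recursion in \eqref{small}, so tracking how any implicit $\Psi$-dependencies in the proof of Theorem \ref{Cruxxxx} translate into dependencies on $\mu$ and $\Theta$ through the normalization is where the bulk of the bookkeeping lies. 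Modulo this, the result is a direct instance of the template $\CI$.
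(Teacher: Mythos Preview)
Your approach is essentially the paper's: the proof there is literally ``Apply $\CI$ to the proof in Theorem~\ref{Cruxxxx}'', and you have spelled out what that means. Two points of detail deserve correction, though neither is fatal.

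First, the passage from $\textsf{CA}(\Gamma)$ to the normal form $(\forall^{\st}Y\in C,s)(\exists^{\st}M)(\forall N\geq M)(G(Y,s,N)=\Gamma(Y,s))$ does \emph{not} rely on any stabilization property of $G$. It is a purely logical consequence of idealisation~\textsf{I}: unfolding $\neg\st(N)$ as $(\forall^{\st}m)(N>m)$ and contraposing~\textsf{I} yields the $(\exists^{\st}M)$ regardless of what $G$ does. The stabilization of $G$ is what makes Theorem~\ref{Cruxxxx} \emph{true}, but that is already given as input; the corollary only needs the normal-form rewriting, which is formal. Your later use of monotonicity---taking the maximum of the finite sequence $s^{*}$---is correct, but that monotonicity is in the \emph{shape} of the formula $(\forall N\geq M)(\dots)$, not in $G$ itself.

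Second, in your displayed normal form you place $\Gamma^{3}$ under the $\forall^{\st}$ prefix. It should remain internally quantified: the original implication has $(\forall\Gamma^{3})$ with no ``st'', and indeed the extracted term $t(Y,s,\mu,\Theta)$ in the conclusion does not depend on $\Gamma$. After normalising $\GH_{\st}(\Gamma)\to\textsf{CA}(\Gamma)$ and pulling standard quantifiers outward, one uses idealisation once more to push the resulting $(\exists^{\st}M)$ past the internal $(\forall\Gamma)$, so that $\Gamma$ sits inside the internal matrix $\psi$. This is precisely why the effective hypothesis is the \emph{full} $\GH(\Gamma)$ rather than $\GH_{\st}(\Gamma)$, and why $t$ is independent of $\Gamma$.

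Finally, the ``main obstacle'' you flag is not an obstacle at all: once the normal form is in hand, Corollary~\ref{consresultcor} produces $t$ mechanically, and the dependence on $\mu$ and $\Theta$ (rather than a separate $\Psi$) is simply what falls out of having used $\sigtoe$ and $\STP$ as the hypotheses. No extra bookkeeping is required beyond what the template already prescribes.
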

\begin{proof}
Apply $\CI$ to the proof in Theorem \ref{Cruxxxx}.  
\end{proof}
Note that Feferman's search operator can be defined in terms of a modulus-of-continuity functional (and vice versa) by combining the results in \cite{kohlenbach2}*{\S3}, \cite{exu}, and \cite{kohlenbach4}*{\S4}.  Hence, we have the following:  
\begin{cor}
In the system from the previous corollary, the Gandy-Hyland functional can be expressed in terms of a modulus-of-continuity functional and the special fan functional.  
\end{cor}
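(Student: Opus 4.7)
The plan is to combine the preceding Corollary with the cited interdefinability between Feferman's search operator $\mu^{2}$ and a modulus-of-continuity functional $\Psi^{3}$ satisfying $\textsf{MPC}(\Psi)$. The previous corollary already produced a term $t^{4}$ such that whenever $\textsf{GH}(\Gamma)$, $\textsf{MU}(\mu)$, and $\SCF(\Theta)$ hold, one has $G(Y,s,t(Y,s,\mu,\Theta))=\Gamma(Y,s)$ for all $Y^{2}\in C$, $s^{0}$. The only remaining task is to replace the parameter $\mu$ by something uniformly computable from $\Psi$.

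First, I would invoke the results in \cite{kohlenbach2}*{\S3}, \cite{exu}, and \cite{kohlenbach4}*{\S4} to extract an explicit term $r$ of $\textsf{E-PA}^{\omega*}$ such that, provably in $\textsf{E-PA}^{\omega*}+\QFAC^{1,0}$, one has $\textsf{MU}(r(\Psi))$ whenever $\textsf{MPC}(\Psi)$ holds. This is the interdefinability statement mentioned just before the corollary, made uniform; the cited proofs are constructive in the sense that the search operator is built explicitly from the continuity modulus by recursion along the evident bound $\Psi(Y,f)$.

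Second, I would substitute this $r$ into the term produced by Corollary \ref{TE}. Concretely, define
\[
t^{*}(Y,s,\Psi,\Theta) := t(Y,s,r(\Psi),\Theta).
\]
Given $\textsf{GH}(\Gamma)$, $\textsf{MPC}(\Psi)$, and $\SCF(\Theta)$, the first step guarantees $\textsf{MU}(r(\Psi))$, so Corollary \ref{TE} applied to $\mu := r(\Psi)$ yields
\[
G\bigl(Y,s,t^{*}(Y,s,\Psi,\Theta)\bigr) = \Gamma(Y,s)
\]
for all $Y^{2}\in C$ and $s^{0}$. Since $G$ is primitive recursive in the sense of G\"odel's $\textsf{T}$, the term $t^{*}$ thus expresses $\Gamma$ in terms of a modulus-of-continuity functional and a special fan functional, as required.

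The only real subtlety is ensuring that the interdefinability of $\mu^{2}$ and $\Psi^{3}$ is genuinely uniform and available in $\textsf{E-PA}^{\omega*}+\QFAC^{1,0}$ rather than merely valid in the continuous functionals; once this is verified by inspecting the cited constructions, the composition of terms is routine and the corollary follows immediately.
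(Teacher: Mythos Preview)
Your proposal is correct and takes essentially the same approach as the paper: the paper simply remarks that Feferman's search operator is interdefinable with a modulus-of-continuity functional (citing \cite{kohlenbach2}*{\S3}, \cite{exu}, \cite{kohlenbach4}*{\S4}) and then invokes the previous corollary, which is precisely what you have spelled out by producing the term $r$ and substituting $r(\Psi)$ for $\mu$ in $t$. Your explicit treatment of the uniformity issue is a welcome elaboration of what the paper leaves implicit.
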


\subsection{Term extraction and $\Gamma$, again}
In this section, we show that the results of the previous section are rather \emph{modular}, in that we may obtain variations of Theorem \ref{Cruxxxx} and its corollaries.   
To this end, let $\textsf{NPC}(Y)$ be the following formula:
\be\tag{$\textup{\textsf{NPC}}(Y)$}
(\forall^{\st}f^{1})(\forall g^{1})(f\approx_{1}g \di Y(f)=_{0}Y(g)), 
\ee
i.e.\ the previous formula expresses that $Y^{2}\in C$ is nonstandard continuous.  Furthermore, let $\ST(\Gamma, Y)$ be $(\forall^{\st}s^{0})(\st(\Gamma(Y,s)))$, i.e.\ $\Gamma$ produces standard outputs, and let 
$\GH(\Gamma, Y)$ be \eqref{GH} with the two leading quantifiers omitted.  %$(\forall s^{0})\GH(\Gamma, Y, s)$.  
\begin{thm}\label{ragnor}
The system $\P+\STP$ proves that for all $\Gamma^{3}$ and $Y^{2}$
\be\label{drfre}
\big[\NSC(Y) \wedge \GH(\Gamma, Y) \wedge \textsf{\textup{ST}}(\Gamma, Y)\big]\di (\forall^{\st}s^{0})(\forall N\in \Omega)(\Gamma(Y,s)=G(Y, s, N)). 
\ee
\end{thm}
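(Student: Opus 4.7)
The plan is to adapt the argument behind Theorem \ref{Cruxxxx}: the hypotheses $\NSC(Y)$ and $\ST(\Gamma,Y)$ are pointwise-standard residues of the Transfer principle $\sigtoe$ used there, so the standardisation step can be carried out with $\STP$ alone.

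Fix $Y$ and $\Gamma$ satisfying the three hypotheses, a standard $s$, and a nonstandard $N\in\Omega$. Since $|s|$ is standard, $|s|<N$, so the second clause in the definition of $G$ applies:
\[
G(Y,s,N)=Y\big(s*0*(\lambda n)G(Y,s*(n+1),N)\big),
\]
while $\GH(\Gamma,Y)$ at $s$ gives
\[
\Gamma(Y,s)=Y\big(s*0*(\lambda n)\Gamma(Y,s*(n+1))\big).
\]
Write $\phi$ and $\psi$ for the two arguments of $Y$. By $\ST(\Gamma,Y)$, $\phi$ takes standard values at each standard position; hence $\STP$ applied to a $\{0,1\}$-encoding of $\phi$, combined with the standard finite-choice axiom $\HAC^{\text{int}}$ of $\P$, yields a standard $\phi^{*}$ with $\phi^{*}\approx_{1}\phi$. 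Applying $\NSC(Y)$ gives $Y(\phi)=Y(\phi^{*})$, and a second application reduces the desired identity $Y(\phi)=Y(\psi)$ to $\phi^{*}\approx_{1}\psi$. Since $\phi$ and $\psi$ agree on their first $|s|+1$ entries, this amounts to
\[
\Gamma(Y,s*(n+1))=G(Y,s*(n+1),N)\qquad\text{for every standard }n,
\]
i.e., the theorem's conclusion at the longer standard sequence $s*(n+1)$.

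This is a \emph{reduction}, not yet a proof, and closing the resulting coinductive loop is the heart of the argument. I would do so by infinite descent. Assume $\Gamma(Y,s_{0})\neq G(Y,s_{0},N)$ at some standard $s_{0}$; the reduction combined with internal $\mu$-minimisation produces a standard $n_{0}$, hence a longer standard $s_{1}:=s_{0}*(n_{0}+1)$ on which the inequality again holds. Iterating externally yields a chain $s_{0}\subsetneq s_{1}\subsetneq\dots$ of standard sequences; $\STP$ (via the same bit-encoding trick) then furnishes a standard path $\bar p^{*}$ that agrees on standard positions with the union of all $s_{i}$. Underspill applied to the internal formula ``$Y(\bar p^{*})=Y(\bar p^{*}|_{K}*0^{\omega})$'', true for every nonstandard $K$ by $\NSC(Y)$, produces a standard modulus $K_{0}$; combining $K_{0}$ with the recursion equations for $\Gamma$ and $G$ at an $s_{i}$ with $|s_{i}|\geq K_{0}$ should force the two sides to agree, contradicting the defining inequality of the chain.

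The main obstacle is precisely this last step: turning the pointwise notion $\NSC(Y)$ into a usable standard modulus via underspill, and then threading it through the recursive structures of both $\Gamma$ and $G$ along a standard path. The $\{0,1\}$-encoding needed to accommodate $\STP$'s boundedness hypothesis for $\N$-valued data is a routine detour using $\HAC^{\text{int}}$. Apart from the bar step, the proof parallels that of Theorem \ref{Cruxxxx}, which explains why the theorem still goes through in $\P+\STP$ even after $\sigtoe$ is dropped.
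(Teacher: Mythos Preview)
The paper does not prove Theorem~\ref{ragnor}; it is stated as a variation of Theorem~\ref{Cruxxxx} with the details deferred to \cite{samGH}. So there is no in-paper proof to compare against, and I assess your sketch on its own terms.

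Your reduction step is correct, and closing the coinductive loop by passing to a standard limit path and extracting a standard modulus of continuity there is the right strategy. But the passage from the one-step reduction to an actual internal path hides a genuine gap. You write ``internal $\mu$-minimisation'' and then ``iterating externally''; however, \textsf{E-PA}$^{\omega*}$ (and hence $\P$) has only primitive recursion, not unbounded search, so the map $t\mapsto(\text{least }n\text{ with }t*(n{+}1)\text{ bad})$ is not available as a term, and an externally produced chain is not an internal object to which $\STP$ can be applied. The repair is to use $\HAC_{\intern}$ on the reduction itself: from $(\forall^{\st}t)(\exists^{\st}n)[t\text{ bad}\rightarrow t*(n{+}1)\text{ bad}]$ extract a standard finite-valued $F$ with $(\forall^{\st}t)(\exists n\in F(t))[\,\ldots\,]$; now the search for a bad child is bounded by $\max F(t)$, hence primitive recursive, and the chain $(s_i)_i$ is definable by ordinary recursion in $\P$. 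External induction then shows that $s_i$ is standard and bad for every standard $i$, and the limit $p$ is an internal object with $p(k)$ standard for standard $k$, so your $\STP$-encoding trick applies.

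A smaller point: underspill on $Y(\bar p^{*})=Y(\overline{\bar p^{*}}K*0^{\omega})$ yields only a single equality, not a modulus. Apply underspill instead to the internal predicate $(\forall g)(\overline{g}K=\overline{\bar p^{*}}K\rightarrow Y(g)=Y(\bar p^{*}))$, which holds for every nonstandard $K$ by $\NSC(Y)$; the resulting standard $K_0$ is a genuine modulus, and then any $s_i$ with $|s_i|\geq K_0$ has $\overline{\phi_i}K_0=\overline{\bar p^{*}}K_0=\overline{\psi_i}K_0$, whence $\Gamma(Y,s_i)=Y(\phi_i)=Y(\bar p^{*})=Y(\psi_i)=G(Y,s_i,N)$, the desired contradiction.
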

We need the following principles;  Note that $\PCM(Y^{2},Z^{2})$ expresses that $Z$ is a modulus of pointwise continuity for $Y$.
\be\tag{$\PCM(Y, Z)$}
(\forall f^{1}, g^{1})(\overline{f}Z(f)=_{0}\overline{g}Z(f)\di Y(f)=_{0}Y(g))
\ee
\be\tag{$\GHU(\Gamma, Y, H)$}
(\forall  s^{0})\big[\Gamma(Y, s)=Y(s*0*(\lambda n)\Gamma(Y, s*(n+1)))\wedge \Gamma(Y, s)\leq H(Y, s)\big],
\ee

Applying $\CI$ to the previous theorem, one obtains the following theorem.
\begin{cor}[Term Extraction]\label{boei}
From the proof in Theorem \ref{ragnor}, a term $t$ can be extracted such that $\textsf{\textup{E-PA}}^{\omega*}+\QFAC^{1,0}$ proves for $\Xi=(H^{1},Z^{2},\Theta^{3})$ and $\Gamma^{3}, Y^{2}$ that
\[
\big[\PCM(Y,Z)\wedge \SCF(\Theta) \wedge \GHU(\Gamma, Y,H) \big]\di (\forall s)(\forall N\geq t(s,\Xi))(\Gamma(Y,s)=G(Y, s, N)),  
\]
i.e.\ the Gandy-Hyland functional $\Gamma$ at $Y$ can be approximated via a modulus of continuity of $Y$, the special fan functional, and an upper bound for $\Gamma(Y,\cdot)$.  
\end{cor}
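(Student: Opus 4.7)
The plan is to apply the template $\CI$ from Section~\ref{detail} to the proof of Theorem~\ref{ragnor}, in exact analogy with how Corollary~\ref{TE} was extracted from Theorem~\ref{Cruxxxx}. The input is the implication \eqref{drfre}, proved in $\P+\STP$; the desired output is the effective implication together with the term $t(s,\Xi)$.

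First I would put each of the hypotheses of \eqref{drfre} into a normal form $(\forall^{\st}x)(\exists^{\st}y)\varphi(x,y)$ and read off its effective witnessing content. Resolving $\approx_{1}$ in $\NSC(Y)$ and applying idealisation \textsf{I} as in the proof of Theorem~\ref{helful} gives the normal form $(\forall^{\st}f^{1})(\exists^{\st}N^{0})(\forall g^{1})(\overline{f}N=\overline{g}N\di Y(f)=Y(g))$, whose effective witnessing content is precisely $\PCM(Y,Z)$ with $Z(f)$ returning the standard witness $N$. In the same way, $\ST(\Gamma,Y)$, i.e.\ $(\forall^{\st}s^{0})\st(\Gamma(Y,s))$, has the normal form $(\forall^{\st}s^{0})(\exists^{\st}N^{0})(\Gamma(Y,s)\leq N)$; combined with the already internal $\GH(\Gamma,Y)$, its effective content is $\GHU(\Gamma,Y,H)$. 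Finally, $\STP$ admits a normal form whose extracted content is $\SCF(\Theta)$, exactly as used in the proof of Corollary~\ref{TE}.

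Next I would treat the consequent $(\forall^{\st}s^{0})(\forall N\in\Omega)(\Gamma(Y,s)=G(Y,s,N))$, which unfolds to $(\forall^{\st}s^{0})(\forall N^{0})(\neg\st(N)\di \Gamma(Y,s)=G(Y,s,N))$. A standard overflow step, which at this syntactic level is a direct application of idealisation \textsf{I} to the internal equality, converts it into the normal form $(\forall^{\st}s^{0})(\exists^{\st}M^{0})(\forall N\geq M)(\Gamma(Y,s)=G(Y,s,N))$. Replacing each standard existential in the antecedents by its modulus, strengthening, and pushing standard quantifiers outward, \eqref{drfre} is reshaped into a single normal form
\[
(\forall^{\st}\Gamma,Y,s,H,Z,\Theta)(\exists^{\st}M)\big[\PCM(Y,Z)\wedge\SCF(\Theta)\wedge\GHU(\Gamma,Y,H)\di (\forall N\geq M)(\Gamma(Y,s)=G(Y,s,N))\big],
\]
to which Corollary~\ref{consresultcor} yields a closed term $u$ producing a finite sequence of candidate witnesses for $M$; the required $t(s,\Xi)$ is then the maximum entry of $u(s,\Xi)$, using that the consequent is monotone in $M$. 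The main obstacle will be the careful threading of dependencies: one must verify that the extracted $t$ genuinely depends only on $s$ and on the bundle $\Xi=(H,Z,\Theta)$ and not covertly on $\Gamma$ or $Y$ beyond what these moduli capture, and that the reformulations of the nested implication into a single normal form go through inside $\P$ with only $\QFAC^{1,0}$ as the additional choice principle.
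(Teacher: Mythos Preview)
Your approach is precisely the paper's: apply the template~$\CI$ to Theorem~\ref{ragnor}, and the paper in fact gives no more detail than that single sentence. Your unpacking of the normal forms for $\NSC(Y)$, $\ST(\Gamma,Y)$, $\STP$, and the consequent is correct, and the monotonicity-in-$M$ step at the end is exactly right.

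There is, however, one genuine gap in your displayed normal form. You place $\Gamma$ and $Y$ under the leading $(\forall^{\st}\cdot)$, which forces the extracted term to depend on $\Gamma$ and $Y$; you then flag this as ``the main obstacle'' but leave it unresolved. The resolution is \emph{not} a post-hoc verification that the dependence is inessential---with your normal form it will not be. Rather, $(\forall \Gamma^{3})(\forall Y^{2})$ must be kept as \emph{internal} quantifiers throughout, exactly as $(\forall f:\R\di\R)$ is kept internal in the passage from \eqref{ES2T} to \eqref{RF2T} in the proof of Theorem~\ref{varou}. One then arrives at
\[
(\forall^{\st} s,H,Z,\Theta)(\forall \Gamma,Y)(\exists^{\st}M)\big[\PCM(Y,Z)\wedge\SCF(\Theta)\wedge\GHU(\Gamma,Y,H)\di (\forall N\geq M)(\Gamma(Y,s)=G(Y,s,N))\big],
\]
and the contraposition of idealisation~\textsf{I} pulls $(\exists^{\st}M)$ outside the internal $(\forall \Gamma,Y)$ to yield
\[
(\forall^{\st} s,H,Z,\Theta)(\exists^{\st}M)(\forall \Gamma,Y)[\dots],
\]
after which Corollary~\ref{consresultcor} and your max-over-the-finite-sequence step give a term $t(s,\Xi)$ that is independent of $\Gamma,Y$ by construction. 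This is the same mechanism by which the term in $\CRI_{\ef}(t)$ ends up independent of $f$.
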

\subsection{The special fan functional}\label{norke}
We discuss some surprising (computational and otherwise) properties of the special fan functional, which was first introduced in \cite{samGH}*{\S3}.  

\medskip

First of all, the full axiom \emph{Transfer} of $\IST$ does not imply the full axiom \emph{Standard Part} (over various systems; see \cites{blaaskeswijsmaken, gordon2}).
In this light, it is a natural question whether the same holds for prominent fragments discussed in this paper.  For instance, are $\paai\di \STP$ or $\Paai\di \STP$ provable in $\P$?

\medskip

Secondly, the previous questions can be translated into relative computability questions regarding the special fan functional and functionals like $(\mu^{2})$.  
We briefly discuss the answers to these questions from \cite{norsa}.     
We need the following functionals.  
\be
(\forall Y^{2}) (\forall f^{1}, g^{1}\leq_{1}1)(\overline{f}\Phi(Y)=\overline{g}\Phi(Y)\notag \di Y(f)=Y(g)). \label{lukl3}\tag{$\textsf{\textup{MUC}}(\Phi)$}
\ee
\be\tag{$\mathcal{E}_{2}$}\label{hah}
(\exists \xi^{3})(\forall Y^{2})\big[  (\exists f^{1})(Y(f)=0)\asa \xi(Y)=0  \big].
\ee
The functional $\Phi^{3}$ as in $\MUC(\Phi)$ is called the \emph{intuitionistic} \emph{fan functional} and yields a conservative extension of weak K\"onig's lemma for the second-order language (See \cite{kohlenbach2}*{Prop.\ 3.15}).  By the following theorems, the special fan functional is an object of intuitionistic and classical mathematics.
\begin{thm}
There is a term $t$ such that $\textsf{\textup{E-PA}}^{\omega}$ proves $(\forall \Omega^{3})(\MUC(\Omega)\di \SCF(t(\Omega)))$.  % \wedge (\forall \Theta)(\SCF(\Theta)\di \FAN_{\ef}(s(\Theta))).
\end{thm}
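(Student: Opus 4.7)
The plan is to build $t(\Omega)$ explicitly as a primitive recursive functional in the sense of G\"odel's $\textsf{T}$ and then verify $\SCF(t(\Omega))$ by a direct combinatorial argument, exploiting that $\MUC(\Omega)$ reduces each $g^{2}$ to a function with finite range on Cantor space. Given $\Omega^{3}$ with $\MUC(\Omega)$, for each $g^{2}$ set $N:=\Omega(g)$; for $\sigma\in 2^{N}$ put $\alpha_{\sigma}:=\sigma\ast 0^{\infty}$, and let $M:=\max\{g(\alpha_{\sigma}):\sigma\in 2^{N}\}$ and $K:=\max(N,M)$. Define $t(\Omega)(g)(1):=M$, and let $t(\Omega)(g)(2)$ be the finite sequence enumerating $\sigma\ast 0^{\infty}$ for $\sigma\in 2^{K}$. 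Enumerating $2^{K}$, taking finite maxima, and concatenating a finite prefix with $0^{\infty}$ are all definable in $\textsf{E-PA}^{\omega}$, so $t$ is a closed term.

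To verify $\SCF(t(\Omega))$, fix $g^{2}$, $T\leq_{1}1$, and suppose $\overline{\alpha}g(\alpha)\notin T$ for every $\alpha\in t(\Omega)(g)(2)$; note each such $\alpha$ is automatically $\leq_{1}1$ by construction. Fix $\beta\leq_{1}1$, set $\sigma:=\overline{\beta}K\in 2^{K}$ and $\alpha:=\sigma\ast 0^{\infty}\in t(\Omega)(g)(2)$. Since $K\geq N$ we have $\overline{\alpha}N=\overline{\beta}N$, so $\MUC(\Omega)$ yields $g(\alpha)=g(\beta)$; similarly $g(\alpha)=g(\overline{\alpha}N\ast 0^{\infty})\leq M\leq K$. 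Combining $g(\alpha)\leq K$ with $\overline{\alpha}K=\overline{\beta}K$ gives $\overline{\beta}g(\alpha)=\overline{\alpha}g(\alpha)\notin T$, so $i:=g(\alpha)\leq M=t(\Omega)(g)(1)$ is the desired witness.

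The main obstacle lies in calibrating the finite list $t(\Omega)(g)(2)$. The naive choice of canonical paths only at depth $N$, with bound $t(\Omega)(g)(1):=N$, breaks down when $g(\alpha_{\sigma})>N$: then $\overline{\alpha_{\sigma}}g(\alpha_{\sigma})$ carries trailing zeros that need not coincide with $\beta(N),\beta(N+1),\dots$, so escape from $T$ along $\alpha_{\sigma}$ fails to transfer to $\beta$. Padding the depth of the canonical paths to $K=\max(N,M)$ is exactly what forces $g(\alpha)\leq K$, making the first $g(\alpha)$ bits of $\alpha$ and $\beta$ coincide and the transfer immediate.
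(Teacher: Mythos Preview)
The paper states this theorem without proof (it is a survey, and proofs for the results in this section are deferred to the cited companion papers). Your argument is correct and is the natural construction: use $\MUC(\Omega)$ to obtain $N=\Omega(g)$ and the uniform bound $M=\max_{\sigma\in 2^{N}}g(\sigma*0^{\infty})$ on $g$ over Cantor space, then list the canonical extensions $\sigma*0^{\infty}$ at depth $K=\max(N,M)$ so that for any $\beta\leq_{1}1$ the matching $\alpha$ agrees with $\beta$ on at least $g(\alpha)\leq M\leq K$ bits, whence $\overline{\beta}g(\alpha)=\overline{\alpha}g(\alpha)\notin T$ with $g(\alpha)\leq M=t(\Omega)(g)(1)$. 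Your third paragraph correctly isolates the one subtlety: since $\SCF(\Theta)$ quantifies over arbitrary $T\leq_{1}1$ and does not assume downward closure, the string $\overline{\alpha}g(\alpha)$ cannot be shortened to length $N$, and padding the depth of the listed paths from $N$ to $K$ is exactly what is needed to make the transfer go through.
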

\begin{thm}[$\ZFC$]\label{nor1}
A functional $\Theta^{3}$ as in $\SCF(\Theta)$ can be computed (Kleene's S1-S9) from $\xi$ as in $(\mathcal{E}_{2})$. % as follows:
\end{thm}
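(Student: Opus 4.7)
The plan is to exploit the fact that $\xi$ of $(\mathcal{E}_2)$ decides any $\Pi^1_1$-predicate of Baire space (via its $\Sigma^1_1$-negation), and that Kleene's S1-S9 admits $\xi$-driven selection for $\Sigma^1_1$-definable existence statements on functions. The defining predicate for $\SCF$ at a candidate pair $(N^0, \vec\alpha) = (N, \langle \alpha_1, \ldots, \alpha_k\rangle)$ is
\[
\Psi_g(N, \vec\alpha)\;\equiv\;(\forall T^1\leq_1 1)(\forall \beta^1\leq_1 1)\big[(\forall j\leq k)(\alpha_j\leq_1 1 \di \overline{\alpha_j}g(\alpha_j)\notin T) \di (\exists i\leq_0 N)(\overline{\beta}i\notin T)\big].
\]
After pairing $\langle T, \beta\rangle$ into a single function quantifier, this is $\Pi^1_1$ in $g$, whence its truth value is uniformly (S1-S9)-computable in $\xi$.

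Given this decision procedure, I would define $\Theta(g)$ by a stage-wise construction. Initialize $\vec\alpha_0 = \emptyset$ and $N_0 = 0$. At stage $n$: if $\xi$ confirms $\Psi_g(N_n, \vec\alpha_n)$, output $\Theta(g) := (N_n, \vec\alpha_n)$; otherwise, failure of $\Psi_g$ is the $\Sigma^1_1$-assertion claiming existence of a witness pair $\langle T, \beta\rangle$, from which we extract a concrete $\beta$ using a selection functional computable in $\xi$. Append this $\beta$ to form $\vec\alpha_{n+1}$ and update $N_{n+1} := \max_{j\leq n+1} g(\alpha_j)$, then iterate.

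The main obstacle is termination of this procedure on every input $g^2$. For continuous $g$, termination is immediate from the preceding theorem on the intuitionistic fan functional, which directly supplies a witnessing pair from a modulus of uniform continuity. For arbitrary (possibly discontinuous) $g$, termination is secured by a $\ZFC$-level choice and compactness argument on $2^{\N}$: one shows that the iterative production of failure-witnesses cannot continue indefinitely, on pain of generating an infinite descent incompatible with compactness of $2^{\N}$. The remainder of the argument, namely formalising the sketched recursion within Kleene's S1-S9 schemes relative to $\xi$, is routine given the decidability and selection facts above; thus the crux is the classical termination argument rather than any subtlety in the computability layer.
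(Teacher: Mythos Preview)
The paper does not actually prove this theorem: it simply attributes the result to Normann and defers the proof to the forthcoming \cite{norsa}. There is therefore no argument in the paper against which your attempt can be compared, and I assess your proposal on its own merits.

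Your overall strategy is sound. The functional $\xi$ of $(\mathcal{E}_{2})$ does decide $\Sigma^{1}_{1}$-statements (and, by iteration, $\Sigma^{1}_{n}$-statements) uniformly in type-two parameters, and supports the selection you need. Moreover, the \emph{existence} of a pair $(N,\vec\alpha)$ satisfying $\Psi_{g}$ for arbitrary $g^{2}$ does follow in $\ZFC$ from compactness of Cantor space: the family $\{[\overline{\alpha}g(\alpha)]:\alpha\in 2^{\N}\}$ is an open cover, a finite subcover $\{[\overline{\alpha_{j}}g(\alpha_{j})]:j\le k\}$ exists, and $N=\max_{j}g(\alpha_{j})$ then works.

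The genuine gap is the termination argument for your particular iterative procedure. Your appeal to ``an infinite descent incompatible with compactness'' is not a proof, and the procedure as stated can loop forever under an adversarial (but perfectly admissible) selection. Define $g(0^{n}1\gamma)=n+1$ and $g(0^{\omega})=0$. If at stage $n$ the selection returns $\beta_{n}:=0^{n}10^{\omega}$, then the bars collected so far are $\langle 1\rangle,\langle 01\rangle,\dots,\langle 0^{n-1}1\rangle$, the tree $T_{n}$ consisting of the initial segments of $0^{n}10^{\omega}$ avoids all of them, and $\overline{\beta_{n}}i\in T_{n}$ for every $i\le N_{n}=n$. Hence each stage produces a valid failure-witness and the process never halts---even though the single choice $\alpha=0^{\omega}$ would give $\Psi_{g}(0,(0^{\omega}))$ immediately. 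Since you neither specify which selection you use nor argue that \emph{that} selection avoids such traps, termination is unjustified.

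The repair is easy given the strength of $\xi$: abandon the incremental accumulation of failure-witnesses and search directly. For $n=0,1,2,\dots$ use iterated $\xi$ to decide the $\Sigma^{1}_{2}$-statement ``there exists $\vec\alpha\in(2^{\N})^{\le n}$ with $\Psi_{g}(\max_{j}g(\alpha_{j}),\vec\alpha)$''; when the answer is positive (guaranteed for some $n$ by the compactness argument above), extract such a tuple by $\xi$-driven bit-by-bit selection. This is S1--S9 in $\xi$ and halts on every $g^{2}$.
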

\begin{thm}[$\ZFC$]\label{nor2}
Let $\varphi^{2}$ be any type two functional.   
Any functional $\Theta^{3}$ as in $\SCF(\Theta)$ is not computable (Kleene S1-S9) in $\varphi^{2}$.
\end{thm}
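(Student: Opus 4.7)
The plan is to argue by contradiction: assume $\Theta$ satisfying $\SCF(\Theta)$ is Kleene S1-S9 computable in some type-two $\varphi^{2}$, and then construct a specific pair $(g^{2}, T^{1})$ witnessing failure of the defining implication of $\SCF(\Theta)$. The leverage comes from the mismatch between the finite list $\Theta(g)(2)$ and the numerical bound $\Theta(g)(1)$ on one hand, and the freedom one retains in picking $g$ and $T$ on the other.

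First I would invoke the standard continuity theory for Kleene S1-S9 relative to a type-two oracle: for each $g^{2}$, the computation of $\Theta(g)$ queries $g$ at only countably many arguments (its \emph{query set}), and each witness $\alpha_{i}\in\Theta(g)(2)\subseteq 2^{\mathbb{N}}$ is itself Kleene computable from $\varphi^{2}$ and $g$. Taking $g_{0}$ to be the constantly-zero type-two functional and writing $\Theta(g_{0}) = (n, \langle\alpha_{1},\dots,\alpha_{m}\rangle)$, the $\alpha_{i}$ and the queries of the computation then lie in the countable closure of $\{\varphi^{2}\}$ under Kleene computability.

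Next, I would define a modified input $g$ by $g(\alpha) := n+1$ if $\alpha \in \{\alpha_{1},\dots,\alpha_{m}\}$ and $g(\alpha) := 0$ otherwise, together with the candidate tree $T^{1}\leq_{1}1$ given by $T := \{\sigma \in 2^{<\mathbb{N}} : |\sigma|\leq n\}$, i.e.\ the full binary tree of depth $n$. If the modification from $g_{0}$ to $g$ touches no query of the computation of $\Theta(g_{0})$, then the S1-S9 computation proceeds identically on $g$ and again returns $(n,\langle\alpha_{1},\dots,\alpha_{m}\rangle)$. On the one hand, each $\overline{\alpha_{i}}g(\alpha_{i})$ has length $n+1$ and lies outside $T$, so the antecedent of $\SCF(\Theta)$ holds at $(g,T)$. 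On the other hand, for any $\beta\leq_{1}1$ and any $i\leq n$, the sequence $\overline{\beta}i$ has length at most $n$ and hence belongs to $T$, so the consequent fails. This contradicts $\SCF(\Theta)$ for this $\Theta$.

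The main obstacle is justifying the separation hypothesis used in the second step, namely that $\{\alpha_{1},\dots,\alpha_{m}\}$ is disjoint from the query set of the computation of $\Theta(g_{0})$. This is not automatic, since a subcomputation in S1-S9 may feed its output back as the argument of a later $g$-query. The expected remedy is to pass from $g_{0}$ to a sufficiently Cohen-generic type-two functional over a countable admissible set containing $\varphi^{2}$: by a density/forcing argument the outputs and queries can then be separated, and one can shift $g$ on the finite output set $\{\alpha_{i}\}$ without disturbing the computation. An alternative is an iterative scheme: each time a conflict arises, perturb $g$ at the offending $\alpha_{i}$ and recompute, then argue that this stabilises in finitely many steps by a K\"onig-style compactness argument. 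Once the separation is secured, the arithmetical verification from the previous paragraph closes the argument.
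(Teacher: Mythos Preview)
The paper itself contains no proof of this theorem: immediately after the statement it records that the result ``was first proved by Normann and is forthcoming in \cite{norsa}''. There is therefore nothing in the paper against which to compare your argument, and your proposal must be assessed on its own.

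Your overall strategy is the natural one, and your reduction is correct: with $T$ the full binary tree of height $n=\Theta(g)(1)$, the consequent of $\SCF(\Theta)$ fails automatically, so the whole problem is to manufacture a $g^{2}$ with $g(\alpha)>\Theta(g)(1)$ for every binary $\alpha\in\Theta(g)(2)$. You also correctly observe that S1--S9 computations relative to a type-two oracle have countable, well-founded computation trees and hence countable query sets. So far so good.

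The genuine gap is exactly the one you flag, and neither of your two proposed repairs closes it as written. For the genericity route, note that both the output list $\{\alpha_{1},\dots,\alpha_{m}\}$ and the query arguments of the computation of $\Theta(g)$ are S1--S9 computable from $\varphi$ \emph{and} $g$; taking $g$ generic over a countable structure containing only $\varphi$ gives no a priori reason why these two $g$-dependent countable sets should be disjoint, and an adversary designing $\Theta$ can simply arrange that the computation always queries $g$ at each of its own outputs before returning. You would need a density argument showing that some condition forces the separation, and nothing in your sketch supplies one. For the iterative route, each perturbation may change both the bound $n$ and the list $L$, and there is no well-founded quantity that visibly decreases; ``K\"onig-style compactness'' is not applicable because the relevant tree of partial attempts need not be finitely branching in any useful sense. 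The actual Normann--Sanders argument (now in print) proceeds by a more careful analysis of how the query arguments of an S1--S9 computation with only type-two oracles are generated, exploiting that they all lie in a countable set closed under relative computability from $\varphi$; your sketch is in the right neighbourhood, but the fixed-point step needs a real additional idea rather than an appeal to genericity or unspecified iteration.
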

Theorems \ref{nor1} and \ref{nor2} were first proved by Normann and are forthcoming in \cite{norsa}.  Theorem \ref{nor2} for the special case of $(\mu^{2})$ originates from the conjecture by the author that $\paai$ does not imply $\STP$ over $\P$.    Since the Suslin functional is of type two, it cannot compute (Kleene S1-S9) the special fan functional, which translates back to the fact that $\P$ does not prove $\Paai\di \STP$.

\begin{ack}\rm
This research was supported by the following funding bodies: FWO Flanders, the John Templeton Foundation, the Alexander von Humboldt Foundation, the University of Oslo, and the Japan Society for the Promotion of Science.  
The author expresses his gratitude towards these institutions. 
The author would like to thank Grigori Mints, Ulrich Kohlenbach, Horst Osswald, Helmut Schwichtenberg, Stephan Hartmann, Dag Normann, and Karel Hrbacek for their valuable advice.  
Finally, I thank the anonymous referees for their many helpful suggestions.  
%JTF, JSPS, Humboldt, FWO
% Kohlenbach, Fujiwara, Yamazaki, Benno van den Berg
\end{ack}

%\section{Other topics}\label{OT}
%\subsection{Partial recursive functions}
%
%X
%
%\subsection{The Loeb measure}
%
%X

\section{Bibliography}
\bibliographystyle{eptcs}
%\begin{bibdiv}
\begin{biblist}
%\bibselect{allkeida}
\bib{NORSNSA}{book}{
  author={Albeverio, Sergio},
  author={H{\o }egh-Krohn, Raphael},
  author={Fenstad, Jens Erik},
  author={Lindstr{\o }m, Tom},
  title={Nonstandard methods in stochastic analysis and mathematical physics},
  series={Pure and Applied Mathematics},
  volume={122},
  publisher={Academic Press},
  date={1986},
  pages={xii+514},
  note={DOI \url{http://dx.doi.org/10.1090/S0273-0979-1987-15607-2}},
}

\bib{compdnr}{article}{
  author={Ambos-Spies, Klaus},
  author={Kjos-Hanssen, Bj{\o }rn},
  author={Lempp, Steffen},
  author={Slaman, Theodore A.},
  title={Comparing \textup {DNR} and \textup {WWKL}},
  journal={J. Symbolic Logic},
  volume={69},
  date={2004},
  number={4},
  pages={1089--1104},
  note={DOI \url{http://dx.doi.org/10.2178/jsl/1102022212}},
}

\bib{avi2}{article}{
  author={Avigad, Jeremy},
  author={Feferman, Solomon},
  title={G\"odel's functional \(``Dialectica''\) interpretation},
  conference={ title={Handbook of proof theory}, },
  book={ series={Stud. Logic Found. Math.}, volume={137}, },
  date={1998},
  pages={337--405},
  note={DOI \url{http://dx.doi.org/10.1016/S0049-237X(98)80020-7}},
}

\bib{brie}{article}{
  author={van den Berg, Benno},
  author={Briseid, Eyvind},
  author={Safarik, Pavol},
  title={A functional interpretation for nonstandard arithmetic},
  journal={Ann. Pure Appl. Logic},
  volume={163},
  date={2012},
  number={12},
  pages={1962--1994},
  note={DOI \url{http://dx.doi.org/10.1016/j.apal.2012.07.003}},
}

\bib{blaaskeswijsmaken}{article}{
  author={Blass, Andreas},
  title={End extensions, conservative extensions, and the Rudin-Frol\'\i k ordering},
  journal={Trans. Amer. Math. Soc.},
  volume={225},
  date={1977},
  pages={325--340},
  note={DOI \url{http://dx.doi.org/10.2307/1997510}},
}

\bib{bergolijf}{article}{
  author={Berger, Ulrich},
  author={Oliva, Paulo},
  title={Modified bar recursion},
  journal={Math. Structures Comput. Sci.},
  volume={16},
  date={2006},
  number={2},
  pages={163--183},
  note={DOI \url{http://dx.doi.org/10.1017/S0960129506005093}},
}

\bib{bridge1}{book}{
  author={Bishop, Errett},
  author={Bridges, Douglas S.},
  title={Constructive analysis},
  series={Grundlehren der Mathematischen Wissenschaften},
  volume={279},
  publisher={Springer-Verlag},
  date={1985},
  pages={xii+477},
  note={DOI \url{http://dx.doi.org/10.1007/978-3-642-61667-9}},
}

\bib{damirzoo}{misc}{
  author={Dzhafarov, Damir D.},
  title={Reverse Mathematics Zoo},
  note={\url {http://rmzoo.uconn.edu/}},
}

\bib{escaleert}{article}{
  author={Escard{\'o}, Mart{\'{\i }}n},
  author={Oliva, Paulo},
  author={Powell, Thomas},
  title={System T and the product of selection functions},
  conference={ title={Computer science logic 2011}, },
  book={ series={LIPIcs. Leibniz Int. Proc. Inform.}, volume={12}, publisher={Schloss Dagstuhl. Leibniz-Zent. Inform., Wadern}, },
  date={2011},
  note={DOI \url{http://dx.doi.org/10.4230/LIPIcs.CSL.2011.233}},
  pages={233--247},
}

\bib{exu}{article}{
  author={Escard{\'o}, Mart{\'{\i }}n},
  author={Xu, Chuangjie},
  title={The Inconsistency of a Brouwerian Continuity Principle with the CurryHoward Interpretation},
  booktitle={13th International Conference on Typed Lambda Calculi and Applications (TLCA 2015)},
  pages={153--164},
  series={Leibniz International Proceedings in Informatics (LIPIcs)},
  year={2015},
  volume={38},
  note={DOI \url{http://dx.doi.org/10.4230/LIPIcs.TLCA.2015.153}},
}

\bib{fried}{article}{
  author={Friedman, Harvey},
  title={Some systems of second order arithmetic and their use},
  conference={ title={Proceedings of the International Congress of Mathematicians (Vancouver, B.\ C., 1974), Vol.\ 1}, },
  book={ },
  date={1975},
  pages={235--242},
  note={DOI \url{http://dx.doi.org/10.1.1.457.4194}},
}

\bib{fried2}{article}{
  author={Friedman, Harvey},
  title={ Systems of second order arithmetic with restricted induction, I \& II (Abstracts) },
  journal={Journal of Symbolic Logic},
  volume={41},
  date={1976},
  pages={557--559},
  note={DOI \url{http://dx.doi.org/10.2307/2272259}},
}

\bib{gandymahat}{article}{
  author={Gandy, Robin},
  author={Hyland, Martin},
  title={Computable and recursively countable functions of higher type},
  conference={ },
  book={ publisher={North-Holland}, },
  date={1977},
  pages={407--438. Studies in Logic and Found. Math 87},
  note={DOI \url{http://dx.doi.org/10.1016/S0049-237X(09)70437-9}},
}

\bib{gordon2}{article}{
  author={Gordon, Evgeni\v {\i } I.},
  title={Relatively standard elements in Nelson's internal set theory},
  journal={Siberian Mathematical Journal},
  volume={30},
  number={1},
  pages={68--73},
  note={DOI \url{http://dx.doi.org/10.1007/BF01054217}},
}

\bib{withgusto}{article}{
  author={Giusto, Mariagnese},
  author={Simpson, Stephen G.},
  title={Located sets and reverse mathematics},
  journal={J. Symbolic Logic},
  volume={65},
  date={2000},
  number={3},
  pages={1451--1480},
  note={DOI \url{http://dx.doi.org/10.2307/2586708}},
}

\bib{horihata1}{article}{
  author={Horihata, Yoshihiro},
  author={Yokoyama, Keita},
  title={Nonstandard second-order arithmetic and Riemann's mapping theorem},
  journal={Ann. Pure Appl. Logic},
  volume={165},
  date={2014},
  number={2},
  pages={520--551},
  note={DOI \url{http://dx.doi.org/10.1016/j.apal.2013.06.022}},
}

\bib{loeb1}{book}{
  author={Hurd, Albert E.},
  author={Loeb, Peter A.},
  title={An introduction to nonstandard real analysis},
  series={Pure and Applied Mathematics},
  volume={118},
  publisher={Academic Press Inc.},
  place={Orlando, FL},
  date={1985},
  pages={xii+232},
  note={DOI \url{http://dx.doi.org/10.1090/S0273-0979-1987-15523-6}},
}

\bib{ishi1}{article}{
  author={Ishihara, Hajime},
  title={Reverse mathematics in Bishop's constructive mathematics},
  year={2006},
  journal={Philosophia Scientiae (Cahier Sp\'ecial)},
  volume={6},
  pages={43-59},
  note={DOI \url{http://dx.doi.org/10.4000/philosophiascientiae.406}},
}

\bib{kohlenbach3}{book}{
  author={Kohlenbach, Ulrich},
  title={Applied proof theory: proof interpretations and their use in mathematics},
  series={Springer Monographs in Mathematics},
  publisher={Springer-Verlag},
  place={Berlin},
  date={2008},
  pages={xx+532},
  note={DOI \url{http://dx.doi.org/10.1007/978-3-540-77533-1}},
}

\bib{kohlenbach2}{article}{
  author={Kohlenbach, Ulrich},
  title={Higher order reverse mathematics},
  conference={ title={Reverse mathematics 2001}, },
  book={ series={Lect. Notes Log.}, volume={21}, publisher={ASL}, },
  date={2005},
  pages={281--295},
  note={DOI \url{http://dx.doi.org/10.1.1.165.461}},
}

\bib{kohlenbach4}{article}{
  author={Kohlenbach, Ulrich},
  title={Foundational and mathematical uses of higher types},
  conference={ title={Reflections on the foundations of mathematics (Stanford, CA, 1998)}, },
  book={ series={Lect. Notes Log.}, volume={15}, publisher={ASL}, },
  date={2002},
  pages={92--116},
  note={DOI \url{http://dx.doi.org/10.1.1.16.2371}},
}

\bib{kreimiearivier}{article}{
  author={Kreisel, G.},
  title={Mathematical significance of consistency proofs},
  journal={J. Symb. Logic},
  date={1958},
  pages={155--182},
  note={DOI \url{http://dx.doi.org/10.1090/S0002-9904-1967-11689-6}},
}

\bib{longmann}{book}{
  author={Longley, John},
  author={Normann, Dag},
  title={Higher-order Computability},
  year={2015},
  publisher={Springer},
  series={Theory and Applications of Computability},
  note={DOI \url{http://dx.doi.org/10.1007/978-3-662-47992-6}},
}

\bib{montahue}{article}{
  author={Montalb{\'a}n, Antonio},
  title={Open questions in reverse mathematics},
  journal={Bull. Symbolic Logic},
  volume={17},
  date={2011},
  number={3},
  pages={431--454},
  note={DOI \url{http://dx.doi.org/10.2178/bsl/1309952320}},
}

\bib{wownelly}{article}{
  author={Nelson, Edward},
  title={Internal set theory: a new approach to nonstandard analysis},
  journal={Bull. Amer. Math. Soc.},
  volume={83},
  date={1977},
  number={6},
  pages={1165--1198},
  note={DOI \url{http://dx.doi.org/10.2307/2273684}},
}

\bib{noortje}{book}{
  author={Normann, Dag},
  title={Recursion on the countable functionals},
  series={LNM 811},
  volume={811},
  publisher={Springer},
  date={1980},
  pages={viii+191},
  note={DOI \url{http://dx.doi.org/10.2307/2274205}},
}

\bib{norsa}{article}{
  author={Normann, Dag},
  author={Sanders, Sam},
  title={Computability theory, Nonstandard Analysis, and their connections},
  journal={Preprint},
  date={2016},
}

\bib{Oss3}{article}{
  author={Osswald, Horst},
  title={Computation of the kernels of L\'{e}vy functionals and applications},
  journal={Illinois Journal of Mathematics},
  volume={55},
  date={2011},
  number={3},
  pages={815--833},
}

\bib{Oss2}{book}{
  author={Osswald, Horst},
  title={Malliavin calculus for L\'evy processes and infinite-dimensional Brownian motion},
  series={Cambridge Tracts in Mathematics},
  volume={191},
  publisher={Cambridge University Press},
  place={Cambridge},
  date={2012},
  pages={xx+407},
  note={DOI \url{http://dx.doi.org/10.1017/CBO9781139060110}},
}

\bib{rossenaap}{article}{
  author={Ross, David A.},
  title={A nonstandard proof of a lemma from constructive measure theory},
  journal={MLQ Math. Log. Q.},
  volume={52},
  date={2006},
  number={5},
  pages={494--497},
  note={DOI \url{http://dx.doi.org/10.1002/malq.200610008}},
}

\bib{robinson1}{book}{
  author={Robinson, Abraham},
  title={Non-standard analysis},
  publisher={North-Holland},
  place={Amsterdam},
  date={1966},
  pages={xi+293},
  note={DOI \url{http://dx.doi.org/10.2307/2271109}},
}

\bib{yamayamaharehare}{article}{
  author={Sakamoto, Nobuyuki},
  author={Yamazaki, Takeshi},
  title={Uniform versions of some axioms of second order arithmetic},
  journal={MLQ Math. Log. Q.},
  volume={50},
  date={2004},
  number={6},
  pages={587--593},
  note={DOI \url{http://dx.doi.org/10.1002/malq.200310122}},
}

\bib{aloneatlast3}{article}{
  author={Sanders, Sam},
  title={\textup {ERNA} and {F}riedman's {R}everse {M}athematics},
  year={2011},
  journal={J.\ Symb.\ Logic},
  pages={637-664},
  note={DOI \url{http://dx.doi.org/10.2307/23041884}},
}

\bib{samGH}{article}{
  author={Sanders, Sam},
  title={The Gandy-Hyland functional and a hitherto unknown computational aspect of Nonstandard Analysis},
  year={2015},
  journal={Submitted, Available from: \url {http://arxiv.org/abs/1502.03622}},
}

\bib{sambon}{article}{
  author={Sanders, Sam},
  title={The unreasonable effectiveness of Nonstandard Analysis},
  year={2015},
  journal={Submitted, Available from: \url{http://arxiv.org/abs/1508.07434}},
}

\bib{samzoo}{article}{
  author={Sanders, Sam},
  title={The taming of the Reverse Mathematics zoo},
  year={2015},
  journal={Submitted, \url {http://arxiv.org/abs/1412.2022}},
}

\bib{samzooII}{article}{
  author={Sanders, Sam},
  title={The refining of the taming of the Reverse Mathematics zoo},
  year={2016},
  journal={To appear in Notre Dame Journal for Formal Logic, \url {http://arxiv.org/abs/1602.02270}},
}

\bib{simpson1}{collection}{
  title={Reverse mathematics 2001},
  series={Lecture Notes in Logic},
  volume={21},
  editor={Simpson, Stephen G.},
  publisher={ASL},
  place={La Jolla, CA},
  date={2005},
  pages={x+401},
}

\bib{simpson2}{book}{
  author={Simpson, Stephen G.},
  title={Subsystems of second order arithmetic},
  series={Perspectives in Logic},
  edition={2},
  publisher={CUP},
  date={2009},
  pages={xvi+444},
  note={DOI \url{http://dx.doi.org/10.1017/CBO9780511581007}},
}

\bib{pimpson}{article}{
  author={Simpson, Stephen G.},
  author={Yokoyama, Keita},
  title={A nonstandard counterpart of \textsf {\textup {WWKL}}},
  journal={Notre Dame J. Form. Log.},
  volume={52},
  date={2011},
  number={3},
  pages={229--243},
  note={DOI \url{http://dx.doi.org/10.1215/00294527-1435429}},
}

\bib{stroyan}{book}{
  author={Stroyan, Keith D.},
  author={Luxemburg, Wilhelminus A.J.},
  title={Introduction to the theory of infinitesimals},
  year={1976},
  publisher={Academic Press},
  note={DOI \url{http://dx.doi.org/10.1137/1020056}},
}

\bib{tanaka1}{article}{
  author={Tanaka, Kazuyuki},
  title={The self-embedding theorem of $\WKL _{0}$ and a non-standard method},
  year={1997},
  journal={Annals of Pure and Applied Logic},
  volume={84},
  pages={41-49},
  note={DOI \url{http://dx.doi.org/10.1016/S0168-0072(95)00058-5}},
}

\bib{tanaka2}{article}{
  author={Tanaka, Kazuyuki},
  title={Non-standard analysis in ${\rm WKL}_0$},
  journal={Math. Logic Quart.},
  volume={43},
  date={1997},
  number={3},
  pages={396--400},
  note={DOI \url{http://dx.doi.org/10.1002/malq.19970430312}},
}

\bib{tahaar}{article}{
  author={Tanaka, Kazuyuki},
  author={Yamazaki, Takeshi},
  title={A non-standard construction of Haar measure and weak K\"onig's lemma},
  journal={J. Symbolic Logic},
  volume={65},
  date={2000},
  number={1},
  pages={173--186},
  note={DOI \url{http://dx.doi.org/10.2307/2586530}},
}

\bib{nsawork2}{collection}{
  title={Nonstandard analysis for the working mathematician},
  series={Mathematics and its Applications},
  volume={510},
  editor={Wolff, Manfred},
  editor={Loeb, Peter A.},
  publisher={Kluwer},
  date={2015},
  note={DOI \url{http://dx.doi.org/10.1007/978-94-011-4168-0}},
}

\bib{EXCESS}{article}{
  author={Xu, Chuangjie},
  author={Sanders, Sam},
  title={Extracting the computational content of Nonstandard Analysis},
  journal={In preparation; Agda code: \url {http://cj-xu.github.io/agda/dialectica/Dialectica.html}},
  date={2015},
}

\bib{yo1}{article}{
  author={Yokoyama, Keita},
  title={Formalizing non-standard arguments in second-order arithmetic},
  journal={J. Symbolic Logic},
  volume={75},
  date={2010},
  number={4},
  pages={1199--1210},
  note={DOI \url{http://dx.doi.org/10.2178/jsl/1286198143}},
}

\bib{yokoyama2}{article}{
  author={Yokoyama, Keita},
  title={Non-standard analysis in ${\rm ACA}_0$ and Riemann mapping theorem},
  journal={Math. Log. Q.},
  volume={53},
  date={2007},
  number={2},
  pages={132--146},
  note={DOI \url{http://dx.doi.org/10.1002/malq.200610033}},
}

\bib{yokoyama3}{book}{
  author={Yokoyama, Keita},
  title={Standard and non-standard analysis in second order arithmetic},
  series={Tohoku Mathematical Publications},
  volume={34},
  note={PhD Thesis, Tohoku University, 2007},
  place={Sendai},
  date={2009},
  pages={iv+130},
  url={http://www.math.tohoku.ac.jp/tmj/PDFofTMP/tmp34.pdf},
}

\bib{yuppie}{article}{
  author={Yu, Xiaokang},
  title={Lebesgue convergence theorems and reverse mathematics},
  journal={Math. Logic Quart.},
  volume={40},
  date={1994},
  number={1},
  pages={1--13},
  note={DOI \url{http://dx.doi.org/10.1002/malq.19940400102}},
}

\bib{yussie}{article}{
  author={Yu, Xiaokang},
  author={Simpson, Stephen G.},
  title={Measure theory and weak K\"onig's lemma},
  journal={Arch. Math. Logic},
  volume={30},
  date={1990},
  number={3},
  pages={171--180},
 % note={DOI \url{http://dx.doi.org/10.1007/BF01621469},
  note={DOI \url{http://dx.doi.org/10.1007/BF01621469}},
}

\end{biblist}
%\end{bibdiv}

%\nocite{*}
%\bibliographystyle{eptcs}
%\bibliography{generic}
\end{document}